\documentclass[]{ceurart}

\usepackage{subcaption}
\usepackage{xspace}
\usepackage{amsthm}
\usepackage{algorithm}
\usepackage[noend]{algpseudocode}
\usepackage{placeins}

\graphicspath{{figures/}}

\newtheorem{definition}{Definition}
\newtheorem{lemma}{Lemma}
\newtheorem{proposition}{Proposition}
\newtheorem{theorem}{Theorem}
\newtheorem{corollary}{Corollary}
\newtheorem{example}{Example}

\newcommand{\sharpalg}{\textup{\textsc{SHARP}}\xspace}
\newcommand{\asharpalg}{\texorpdfstring{\textup{A$\sharp$}}{A-sharp}\xspace}
\newcommand{\Tuple}[1]{\left\langle #1 \right\rangle}

\makeatletter
\providecommand{\@LN}[2]{}
\providecommand{\@LN@col}[1]{}
\makeatother

\begin{document}

\copyrightyear{2026}
\copyrightclause{Copyright for this paper by its authors.
  Use permitted under Creative Commons License Attribution 4.0
  International (CC BY 4.0).}

\conference{Joint Workshop on Planning for Complex Real-World Applications (CAIPI) and Bridging the Gap Between AI Planning and (Reinforcement) Learning (PRL), co-located with IJCAI-ECAI 2026, August 17, 2026, Bremen, Germany}

\title{Dynamic Haven Selection for Multi-Agent Pickup and Delivery in Constrained Warehouses}

\author[1]{Taisei Hirayama}[%
orcid=0009-0001-6886-293X,
email=hirayama.h77@gmail.com,
]
\cormark[1]
\author[2]{Kohei Yoshida}[%
orcid=0009-0006-8844-4889,
email=kohei.yoshida.ab@mail.toyota-shokki.co.jp,
]
\author[1]{Hiroki Sakaji}[%
orcid=0000-0001-5030-625X,
email=sakaji@ist.hokudai.ac.jp,
]
\author[1]{Itsuki Noda}[%
orcid=0000-0003-1987-5336,
email=i.noda@ist.hokudai.ac.jp,
]

\address[1]{Hokkaido University, Sapporo 060-0808, Japan}
\address[2]{Toyota Industries Corporation, Aichi 474-8601, Japan}
\cortext[1]{Corresponding author.}

\begin{abstract}
Space-efficient warehouse layouts often contain single-agent-width aisles and dead-end workstations where robots have few places to wait without blocking others.
In Multi-Agent Pickup and Delivery (MAPD) on such constrained layouts, robots must accept online pickup-delivery tasks while preserving protected waiting locations, called Havens, to avoid collisions and deadlocks.
The Safe HAven Retreat Planner (\sharpalg) introduced a mechanism that extends each committed task path with a validated retreat to the agent's dedicated initial Haven.
These fixed-Haven commitments can still send agents toward distant Havens after deliveries.
We present \asharpalg (Adaptive \sharpalg, pronounced ``A-sharp''), a dynamic Haven selection method that allows an agent's retreat target to change at task assignment time.
The central difficulty is that a naive switch can make two agents rely on the same waiting location or let another committed path pass through a location that is still occupied or reserved.
\asharpalg addresses this difficulty through an availability test for candidate Havens and a pending-release rule that keeps the previous Haven protected until the agent actually departs.
Under explicit Haven-structure and Safe Interval Path Planning (SIPP) assumptions, we prove invariant preservation and finite-release completeness, meaning that every task in any finite release sequence is delivered.
Across 72,000 runs on 14,400 paired map--agent-count--rate--seed cases over four maps, both \sharpalg and \asharpalg complete all 14,400 runs assigned to each method.
For makespan (final delivery time), a prespecified paired comparison with Holm correction over all 138 Haven-surplus configurations ($|A|<|H|$) finds that \asharpalg is significantly better in 107 configurations and never significantly worse than \sharpalg; on the tested tree map, the median reduction is 16.7\%.
Reductions in average release-to-delivery time are strong on the tested tree map but not uniform on the two narrow-biconnected maps.
\end{abstract}

\begin{keywords}
Multi-Agent Pickup and Delivery \sep
Multi-Agent Path Finding \sep
Warehouse Robotics \sep
Safe Interval Path Planning
\end{keywords}

\maketitle

\section{Introduction}

Automated warehouses and similar logistics systems require fleets of mobile robots to repeatedly transport items while avoiding collisions.
Multi-Agent Pickup and Delivery (MAPD) models this problem as online pickup-delivery task assignment and path planning on a graph \cite{ma2017lifelong,ma2019lifelong}.
Guaranteeing completion, meaning that every task in any finite release sequence is eventually delivered, is difficult because agents that are waiting, returning, or temporarily blocked can obstruct narrow corridors and dead ends.
Classical complete MAPD methods therefore impose structural assumptions, such as well-formedness for Token Passing (TP) \cite{ma2017lifelong,cap2015complete} or biconnectivity for Priority Inheritance with Backtracking (PIBT) \cite{okumura2019priority}.
These assumptions are useful, but they can be violated in dense warehouse layouts with single-agent-width aisles, dead-end workstations, or tree-like guidepaths.
We treat these guidepaths as application-derived graph abstractions: a narrow aisle permits no side-by-side passing, a dead-end workstation offers no through route, and a parked robot can disconnect the remaining traffic area.
This yields a concrete logistics-planning problem rather than an idealized open-grid benchmark.

Safe-haven retreat is a complementary design pattern for such constrained layouts: every committed task path is paired with a validated return path to a protected waiting location.
Prior \sharpalg reserves both a task-execution path and a retreat path to the agent's dedicated initial position using Safe Interval Path Planning (SIPP), and it can overwrite a returning agent's remaining retreat suffix only after validating a new task path and retreat \cite{hirayama2026sharp,phillips2011sipp}.
This keeps a validated fallback path available while allowing agents to accept new tasks before physically returning home.

The limitation is that the retreat target itself is fixed.
We use \emph{retreat target} for the Haven selected for a committed path.
If a retreat suffix remains in force after delivery, heading back to a static start or home location can add travel that is unnecessary when other safe waiting locations are nearby.
It is tempting to replace the fixed-Haven target with the nearest available safe location.
However, this change is not just a heuristic substitution.
Other agents plan while treating exclusive Havens as blocked vertices, and committed paths are represented in a space-time reservation table.
If Haven ownership is updated naively, another agent may plan through a vertex still physically occupied by the previous owner, or a new owner may select a Haven already reserved by another agent's committed future path.
Thus dynamic Haven selection needs an ownership-transfer protocol.

This paper presents \asharpalg (Adaptive \sharpalg), a safe dynamic Haven extension of fixed-Haven retreat planning for online MAPD with finite task releases.
Despite the pronunciation, \asharpalg is not a variant of A-star search.
Upon task assignment, \asharpalg selects a nearby \emph{available} Haven and commits a full path through pickup, delivery, and the selected Haven.
Availability checks both exclusive ownership and future reservations.
When an agent switches away from a Haven that it still occupies, a pending-release rule keeps the old Haven exclusive until the agent actually departs.
The key contribution is an ownership-transfer protocol that links online task assignment, future path reservations, and exclusive waiting-location ownership while preserving the safety structure of safe-haven retreat planning.
In the experiments, the label \sharpalg denotes the restricted fixed-Haven reimplementation used as the baseline; the full prior system is discussed only as the source of the retreat-planning mechanism.
We target graph-level MAPD instances abstracted from real warehouse constraints and maintain completion guarantees on narrow or dead-end-heavy guidepath structures.
Thus, the contribution is an application-driven planning algorithm; learning is not required for its safety guarantee, although a learned or optimization-based selector can propose Havens subject to the same availability test.

The contributions are:
\begin{itemize}
\item We propose \asharpalg, a dynamic Haven extension of safe-haven retreat planning with an availability-checked, pending-release ownership-transfer protocol.
\item We prove that, under explicit Haven structure and SIPP planning assumptions, dynamic Haven updates preserve exclusivity and reservation invariants and deliver every task in any finite release sequence.
\item We evaluate \asharpalg over 72,000 runs on 14,400 paired map--agent-count--rate--seed cases; both \sharpalg and \asharpalg achieve 100\% success, and a prespecified paired comparison finds a significant makespan improvement in 107 of 138 Haven-surplus configurations, with no significant makespan losses after Holm correction across all 138 comparisons.
\end{itemize}

\section{Background and Positioning}

\subsection{MAPD and Structural Assumptions}

MAPD extends Multi-Agent Path Finding (MAPF) by introducing online pickup-delivery tasks \cite{stern2019multi,ma2017lifelong}.
Each task has a pickup vertex, a delivery vertex, and a release time; an assigned agent must visit pickup before delivery.
The finite-release setting assumes that only finitely many tasks are released, and a planner is complete if all released tasks are delivered in finite time.

Existing guarantees often rely on graph structure.
TP is complete on well-formed MAPD instances, where there are enough non-task endpoints and paths between endpoints can avoid all other endpoints \cite{ma2017lifelong}.
PIBT provides reachability guarantees on graphs where every adjacent pair of vertices lies on a simple cycle, such as biconnected graphs \cite{okumura2019priority}.
Fujitani et al.\ extend this idea to a biconnected main area with attached trees under additional restrictions \cite{fujitani2022priority}.
We use the label PIBTTP-TA for our implementation of their PIBT with temporary priority and temporary avoidance \cite{fujitani2022priority}.

Parking and dummy-endpoint mechanisms also address fallback locations: Liu et al.\ reserve dummy paths to designated parking locations in offline MAPD, Xu et al.\ use dummy endpoints in multi-goal MAPD, and Yamauchi et al.\ choose standby nodes dynamically \cite{liu2019taskpathplanning,xu2022multigoal,yamauchi2022standby}.
These mechanisms are related precedents for waiting-location management, but their guarantee classes and transfer semantics differ from ours.
Liu et al.\ and Xu et al.\ use designated or dummy endpoints to support task planning, whereas \asharpalg transfers an agent's protected retreat target under already committed future reservations.
Standby-based methods dynamically choose temporary standby nodes to avoid indefinite waiting near congested goals.
They integrate standby-node management with Token Passing, whereas our baseline and \asharpalg commit complete SIPP paths that end at persistently protected retreat targets.
Miyashita et al.\ study distributed MAPD with asynchronous execution, occasional delays, local navigation, and flexible working endpoints \cite{miyashita2023distributed}.
Their flexible endpoints relax endpoint and execution constraints, whereas \asharpalg assumes deterministic centralized reservation-table execution and atomically transfers persistent exclusive Haven ownership under already committed future reservations.
The execution models and guarantee targets are therefore complementary rather than equivalent.
The Standby-Based Deadlock Avoidance (SBDA) method could in principle be ported to our maps, but a controlled comparison would require reimplementing its task-execution, standby-node, and coordination semantics inside the common simulator; we therefore do not present an empirical comparison that would confound those changes with Haven selection.
Here, \emph{persistent ownership} means that other agents must treat the retreat target as blocked even after the finite path reservation has ended.
In contrast to standby-node use, \asharpalg transfers that ownership while a reservation-backed commitment is active.
The update is \emph{atomic}: no other assignment or planning step may observe only one of the new path reservations and the new exclusive Haven ownership.
The central distinction is therefore not merely that the waiting location changes online, but that \asharpalg proves finite-release completion while transferring exclusive retreat-Haven ownership in a reservation-based retreat planner.
The structural assumptions above remain restrictive when task endpoints lie on narrow aisles or dead-end workstations.

Other MAPD and lifelong MAPF work emphasizes throughput and scalability in large warehouses, online arrival models, integrated task assignment and path planning, capacity constraints, or energy-aware multi-task routing \cite{wurman2008coordinating,svancara2019online,li2021lifelong,chen2021integrated,kudo2023tsp,kudo2024anytime}.
For example, throughput-oriented Rolling-Horizon Collision Resolution (RHCR) decomposes lifelong MAPF into rolling-window MAPF instances and scales to large warehouse settings \cite{li2021lifelong}.
Delay-robust MAPD, dynamic-environment MAPD, and external-agent MAPD study imperfect execution, disturbance handling, or non-communicating moving agents \cite{lodigiani2023robust,flammini2024deadlocks,bonalumi2025external}.
Those settings are orthogonal to the deterministic reservation-transfer problem studied here.
Rather than claiming throughput state of the art, this paper isolates the completion-oriented constraints of exclusive Haven retreat and asks whether its retreat target can be changed online without breaking the reservation semantics.

\subsection{Prior SHARP and the Fixed-Haven Retreat Design}
\label{sec:prior-sharp}

Prior \sharpalg introduced safe-haven retreat for application-derived multi-task warehouse operations \cite{hirayama2026sharp}.
It models richer warehouse operations, including orders with multiple pickup or delivery requirements, outbound and inbound task types, kinematic costs, and dwell times.
The part most relevant here is the safe-haven retreat mechanism: when an agent accepts a task, the planner reserves both the task path and a retreat path back to the agent's dedicated initial position.
SIPP is used to generate collision-free space-time paths under a reservation table \cite{phillips2011sipp}.
If an agent is already retreating, a new task can overwrite the remaining retreat suffix only when the new task path and subsequent retreat are validated.

Our setting is deliberately narrower.
We study standard online MAPD with one pickup and one delivery per task, finite releases, and graph-level collision constraints.
We inherit the idea of maintaining a committed path ending at a safe retreat target, but ask a different question:
can the retreat target itself be changed online while preserving safety and completeness?
The answer is nontrivial because fixed-Haven \sharpalg makes the design choice to keep every agent's initial Haven for the entire run and therefore never transfers ownership of a safe waiting location.
\asharpalg introduces exactly this missing transfer mechanism.
The contribution here is separate from the full prior system: we isolate the retreat-target mechanism, introduce dynamic Haven ownership transfer, and develop the invariants, proof obligations, and experiments needed for that transfer problem.
In short, the baseline fixes the retreat target, whereas \asharpalg keeps the reservation-backed retreat idea but allows target ownership to change only through an availability-checked transfer.

\section{Problem Setting and Haven Conditions}

We consider discrete time on an undirected graph $G=(V,E)$.
At each timestep an agent may wait or move to an adjacent vertex.
We disallow vertex collisions and edge-swap collisions.
Let $A$ be the set of agents and $\mathrm{pos}_a(t)$ the position of agent $a$ at time $t$.
We assume $A$ is nonempty.

\begin{definition}[Task]
A task $\tau$ is a tuple $\Tuple{s_\tau,g_\tau,r_\tau}$, where $s_\tau\in V$ is the pickup vertex, $g_\tau\in V$ is the delivery vertex, and $r_\tau$ is the release time.
A released task is pending until assigned, in progress until its assigned agent reaches $g_\tau$, and completed at delivery.
Tasks are assigned at most once and are not preempted or transferred before delivery.
\end{definition}

Let $H\subset V$ be a designated set of Haven candidates and let $F:=V\setminus H$ be the task-supporting core.
We write $G[F]$ for the subgraph of $G$ induced by the vertices in $F$.
The current Haven assignment at time $t$ is $\eta_t:A\to H$.
Unlike fixed-Haven planning, $\eta_t(a)$ may change at task assignment time.
Each agent $a$ also has an \emph{exclusive set} $X_t(a)\subseteq H$ containing Haven vertices that remain protected for that agent.
It always contains the current Haven $\eta_t(a)$ and, during a transfer, may temporarily also contain the occupied previous Haven.
We use candidate Haven for an eligible vertex in $H$, and protected Haven for a candidate currently in some agent's exclusive set $X_t(a)$.
A candidate can be unassigned, and it is not blocked solely because it belongs to $H$.
It becomes blocked to another agent only when protected by an exclusive set or occupied by a committed space-time reservation; otherwise, SIPP may traverse it as an ordinary vertex.
It must nevertheless pass the availability test below before an agent can select it as its current Haven.
The completeness proof below uses paths in $G[F]$ as a sufficient construction.

\begin{definition}[Haven structure conditions]
\label{def:haven-structure}
The pair $(G,H)$ satisfies the Haven structure conditions when:
\begin{enumerate}
\item $G[F]$ is connected;
\item every Haven $h\in H$ has at least one neighbor in $F$; and
\item all task endpoints lie in $F$.
\end{enumerate}
\end{definition}

We additionally assume that agents start at distinct Haven candidates, so $|A|\leq |H|$ initially.
The injectivity invariant below maintains distinct current Havens at all later timesteps.

These conditions are sufficient for the feasible path construction used in Lemma~\ref{lem:quiescence} and hence for the finite-release completeness guarantee; they are not necessary conditions for an individual instance to be solvable.
They express that the task area remains connected after removing Havens, and every Haven can be entered from that connected area.
They also separate task endpoints from exclusive waiting locations.

\section{Why Naive Dynamic Haven Switching Fails}

The fixed-Haven retreat mechanism can overwrite a retreat suffix, but the endpoint of that suffix remains the same dedicated initial position.
When the endpoint itself can change, two Haven-switching failures appear, and suffix overwrite also requires careful handling of stale self-reservations.
For this section, write $\mathcal{R}_b(t')=v$ when agent $b$ has committed to occupy vertex $v$ at time $t'$, and use $t^+$ for the phase-local state immediately after a successful commitment at time $t$.
Safe transfer requires the following three obligations:
\begin{align}
\eta_{t^+}(a)\neq h_{\mathrm{old}}\ \land\
\mathrm{pos}_a(t)=h_{\mathrm{old}}
&\;\Longrightarrow\; h_{\mathrm{old}}\in X_{t^+}(a),
\label{eq:pending-obligation}\\
\mathrm{Avail}_t(a,h)
&\;\Longrightarrow\;
\forall b\neq a,\ \forall t'\geq t:\ \mathcal{R}_b(t')\neq h,
\label{eq:reservation-obligation}\\
\mathrm{Commit}(a,\pi,h)
&\;\Longrightarrow\;
(\mathcal{R}_a^{>t},\eta_t(a),X_t(a))
\text{ are replaced as one state transition}.
\label{eq:atomic-obligation}
\end{align}
The counterexamples below respectively negate these obligations: exposing an occupied old Haven violates execution safety, selecting a future-reserved Haven violates reservation exclusion, and retaining stale self-reservations violates replacement semantics.

\begin{example}[Premature release of an occupied Haven]
Suppose an idle agent $a$ is physically located at its old Haven $h_{\mathrm{old}}$ at time $t$.
If $a$ is assigned a task and immediately releases $h_{\mathrm{old}}$ while its first planned move leaves at time $t+1$, another agent may plan through $h_{\mathrm{old}}$ at time $t$ or $t+1$.
The ownership function would say the vertex is free, but the execution state says it is occupied.
The ownership view and the execution view of the vertex disagree.
\end{example}

\begin{example}[Selecting a future-reserved Haven]
Suppose a candidate Haven $h$ is not currently owned by another agent but appears in another agent's committed future path.
If $a$ selects $h$ using ownership alone, then the new path may create a vertex-time conflict with that already committed passage.
Conversely, if the planner relies only on finite path reservations and ignores exclusive ownership, it may miss that a current Haven is persistently protected for its owner.
Safe switching therefore needs both ownership and future-reservation tests.
\end{example}

\begin{example}[Self-collision with stale reservations]
When a retreating agent accepts a new task, its own unexecuted future reservations must be overwritten.
Treating the old retreat suffix as an obstacle to the replacement plan may incorrectly reject feasible updates, while retaining stale reservations after commitment can block later agents.
\end{example}

The first two failures are specific to changing the retreat target.
\asharpalg addresses them by defining availability over both ownership and future reservations, and by delaying the release of a still-occupied old Haven until departure.
Equations~\eqref{eq:pending-obligation}--\eqref{eq:atomic-obligation} isolate the roles of the protocol components: the future-reservation test prevents selecting a candidate with another committed vertex-time use, pending release prevents premature access to an occupied old Haven, and controlled replacement of the selected agent's own future reservations prevents self-blocking.

\section{\texorpdfstring{A$\sharp$: Dynamic Haven Retreat Planning}{A-sharp: Dynamic Haven Retreat Planning}}

As introduced above, each agent has a current Haven assignment $\eta_t(a)$ and an exclusive set $X_t(a)\subseteq H$.
Here $X_t(a)$ records Haven vertices protected for $a$.
The current Haven $\eta_t(a)$ is always in $X_t(a)$.
During pending release, $X_t(a)$ may also contain the previous Haven until $a$ departs from it.
An available Haven is a candidate that passes both the ownership test induced by these exclusive sets and the future-reservation test in Definition~\ref{def:available-haven}.
Other agents treat $\bigcup_{b\neq a}X_t(b)$ as static vertex exclusions while planning; the reservation table stores finite vertex-time reservations, and directed moves induced by consecutive vertex reservations are checked to forbid edge swaps.
For an agent $a$ being replanned at time $t$, its replaceable future reservations are its own reservations at times $t'>t$.
The time-$t$ reservation is not replaceable because the replacement path must start at the agent's current reserved position.

\begin{definition}[Available Haven]
\label{def:available-haven}
A Haven candidate $h\in H$ is available for agent $a$ at time $t$ if:
\begin{enumerate}
\item $h\notin \bigcup_{b\neq a}X_t(b)$; and
\item no agent $b\neq a$ has a committed reservation occupying $h$ at any time $t'\geq t$.
\end{enumerate}
Availability is agent-relative: the tests ignore $a$'s own exclusive set and reservations.
The reservation at time $t$ is allowed because the replacement path starts at $\mathrm{pos}_a(t)$, while $a$'s reservations at times $t'>t$ are replaceable by a successful commitment.
We write $\mathrm{Avail}_t(a,h)$ for the conjunction of the two availability conditions above.
\end{definition}

\paragraph{Self-availability.}
Under the execution and reservation invariants below, agent $a$'s current Haven is available to $a$: disjoint exclusive sets prevent another owner, reservation-exclusion prevents another future reservation there, and $a$'s own waiting reservations are ignored when replanning for $a$.
Thus the available Haven set used by the algorithm is nonempty whenever the invariants hold.

\paragraph{Formal--implementation correspondence.}
The two data structures have different roles.
The exclusive sets $X_t$ persistently block protected Havens, while the reservation table stores only the finite committed path to the selected Haven.
A vertex reservation $(v,t)$ denotes occupancy of $v$ at time $t$, and consecutive reservations of the same agent induce the directed moves used for edge-swap checking.
After the finite path ends, $X_t$ continues protecting the Haven without requiring infinitely many wait reservations.

We use $t$ for the decision phase before executing the transition to $t+1$.
Within that phase, the phase-local state is the within-timestep version of $\eta_t$, $X_t$, and $\mathcal{R}$; each successful commitment updates it immediately, and later candidate evaluations see the updated state.
More precisely, write $S_t^{(k)}=(\eta_t^{(k)},X_t^{(k)},\mathcal{R}_t^{(k)})$ for the phase-local state after the $k$th successful commitment, with $S_t^{(0)}$ denoting the timestep-boundary state; unindexed state in an assignment-loop iteration means the current $S_t^{(k)}$.
The assignment loop is invoked at every timestep while pending tasks remain; an iteration with no feasible commitment leaves the pending tasks for later timesteps.
The timestep order is assignment, execution of one reserved transition, and post-execution pending-release cleanup.

An agent is task-executing until it reaches the delivery vertex of its assigned task.
After delivery, it is retreating until it reaches its current Haven.
An agent is idle when it waits at its current Haven with no assigned task.
Non-idle agents are task-executing or retreating agents with a finite committed path.
Only idle and retreating agents are eligible for a new assignment.
An agent that still holds a pending-release old Haven has not departed from that Haven after a previous assignment and is task-executing, hence it is not eligible for another assignment until the pending release is cleaned up.

When an idle or retreating agent is considered for assignment, \asharpalg chooses a pending task by nearest pickup distance and chooses a target Haven by nearest delivery-to-Haven distance among available candidates.
Here $\mathrm{dist}$ is shortest-path distance on $G$ with unit edge costs; SIPP validation, not this distance heuristic, decides feasibility under reservations and exclusive Havens.
Let $\mathrm{diam}(G[F])$ denote the diameter of the task-supporting core.
For a SIPP segment starting at absolute time $t_s$, the \emph{relative planning horizon} $T_{\max}$ permits arrivals through time $t_s+T_{\max}$.
We require $T_{\max}\geq \mathrm{diam}(G[F])+1$ in the theorem below.
The abstract helper routine \textsc{PlanFullPath}$(a,\tau,h,t,\mathcal{R},X_t,T_{\max})$ returns a Boolean success flag and, when successful, a full path $\pi$ from the current position through pickup and delivery to the selected Haven.
It concatenates three SIPP segments,
$\mathrm{pos}_a(t)\to s_\tau$, $s_\tau\to g_\tau$, and $g_\tau\to h$.
Each segment starts at the arrival time of the previous segment and may include wait actions; successful segments are concatenated after dropping the duplicated endpoint entry at each join.
Each segment treats other agents' reservations and exclusive Havens as constraints, and excludes agent $a$'s replaceable future reservations because a successful commitment atomically deletes and replaces them.
The routine is side-effect free: if any segment fails, no reservation or ownership state is changed.

If validation succeeds, \asharpalg commits by keeping $a$'s time-$t$ vertex reservation fixed, deleting $a$'s vertex reservations for times $t'>t$, inserting the future vertex entries of the new full path, treating consecutive entries of that path from time $t$ onward as the induced directed moves, updating $\eta_t(a)$ and $X_t(a)$, and setting pending release for the old Haven when needed.
After the finite path reaches the selected retreat target, the agent waits at that current Haven under the persistent protection described above.
If the selected Haven is the current Haven, the ownership update is a no-op and no pending-release Haven is created.
After the execution step, \asharpalg removes a pending old Haven from $X_{t+1}(a)$ exactly when agent $a$ has departed from that Haven.
Until cleanup, the old Haven remains blocked to others.

The \sharpalg comparison isolates the effect of dynamic Haven ownership transfer within the same one-pickup-one-delivery MAPD simulator.
In each assignment-loop pass, the variable $c^*$ stores the best validated candidate ranked by pickup distance; at most one such candidate is committed before the next pass begins.
More explicitly, a pass evaluates one greedy task--Haven candidate for every currently eligible agent against the same phase-local state, chooses one globally best validated candidate, commits it, and then recomputes all candidates in the updated state.
Algorithm~\ref{alg:asharp-workshop} gives the adaptive case; Appendix~\ref{app:fixed-pseudocode} gives the restricted fixed-Haven baseline.

During active traffic, \asharpalg evaluates only one greedy task--Haven pair per considered agent: the nearest-pickup pending task and the nearest available Haven for that task.
If validation fails, assignment may be delayed even when another pending task or a farther available Haven would validate.
This restriction is a throughput heuristic, not a safety condition.
Segment-wise validation is a sound feasibility check for the committed path, but during active traffic it is not a global completeness claim over all possible intermediate arrival times and Haven choices.

\begin{algorithm}[!htbp]
\caption{\asharpalg assignment loop}
\label{alg:asharp-workshop}
\footnotesize
\begin{algorithmic}[1]
\Require Decision time $t$; pending tasks $Q$; agents $A$; Haven candidates $H$; assignments $\eta_t$; exclusive sets $X_t$; reservation table $\mathcal{R}$; segment horizon $T_{\max}$
\Ensure Updated $Q$, $\eta_t$, $X_t$, $\mathcal{R}$, and agent commitments
\State Add tasks released at time $t$ to the pending set $Q$
\State $I \gets$ idle or retreating agents with no pending-release old Haven
\While{$Q\neq \emptyset$ and $I\neq \emptyset$}
  \State $c^*\gets \bot$; $d^*\gets \infty$
  \For{each $a\in I$}
    \State $\tau\gets \arg\min_{\tau'\in Q}\mathrm{dist}(\mathrm{pos}_a(t),s_{\tau'})$
    \State $\mathcal{H}_a(t)\gets\{h'\in H:\ \mathrm{Avail}_t(a,h')\}$
    \State $h\gets \arg\min_{h'\in\mathcal{H}_a(t)}\mathrm{dist}(g_\tau,h')$
    \State $(\mathit{success},\pi)\gets \textsc{PlanFullPath}(a,\tau,h,t,\mathcal{R},X_t,T_{\max})$
    \If{$\mathit{success}\ \mathrm{and}\ \mathrm{dist}(\mathrm{pos}_a(t),s_\tau)<d^*$}
      \State $c^*\gets(a,\tau,h,\pi)$; $d^*\gets\mathrm{dist}(\mathrm{pos}_a(t),s_\tau)$
    \EndIf
  \EndFor
  \If{$c^*=\bot$}
    \State \textbf{break}
  \EndIf
  \State Let $(a,\tau,h,\pi)\gets c^*$ and let $h_{\mathrm{old}}\gets \eta_t(a)$
  \State Keep $a$'s vertex reservation at time $t$ fixed
  \State Delete $a$'s vertex reservations for $t'>t$ and insert the future vertex entries of $\pi$ in $\mathcal{R}$
  \State Treat consecutive entries of $\pi$ from time $t$ onward as $a$'s committed directed moves
  \State Assign $\tau$ to $a$ and mark $a$ task-executing
  \If{$h\neq h_{\mathrm{old}}$}
    \State $\eta_t(a)\gets h$
    \State $X_t(a)\gets \{h\}$
    \If{$\mathrm{pos}_a(t)=h_{\mathrm{old}}$}
      \State $X_t(a)\gets X_t(a)\cup\{h_{\mathrm{old}}\}$
    \EndIf
  \Else
    \State Leave $\eta_t(a)$ and $X_t(a)$ unchanged because $h=h_{\mathrm{old}}$
  \EndIf
  \State Remove the selected agent and task from $I$ and $Q$
\EndWhile
\State Execute one timestep along committed reservations
\State After delivery mark an agent retreating; at its current Haven with no task mark it idle
\State Release $h_{\mathrm{old}}$ when $\mathrm{pos}_a(t+1)\neq h_{\mathrm{old}}$; then renew rolling wait reservations for agents with no task
\end{algorithmic}
\end{algorithm}

One assignment-loop pass scans the pending task set $Q$ and the Haven candidates for each agent in the eligible set $I$, and performs at most one full-path SIPP validation per agent.
Here $|\cdot|$ denotes set cardinality.
Assuming shortest-path distance fields used by the greedy selectors are precomputed or cached, this costs $O(|I|(|Q|+|H|C_{\mathrm{avail}}+C_{\mathrm{SIPP}}))$ per pass, where $C_{\mathrm{avail}}$ is the cost of one availability test and $C_{\mathrm{SIPP}}$ is the cost of the three SIPP segment searches.
A timestep can contain up to $\min(|I|,|Q|)$ successful passes until either eligible agents or pending tasks are exhausted.
Ignoring the shrinkage of $I$ and $Q$ across successful passes, this gives a worst-case per-timestep cost of $O(\min(|I|,|Q|)|I|(|Q|+|H|C_{\mathrm{avail}}+C_{\mathrm{SIPP}}))$.
A timestep may also include one final failed pass before the loop breaks; this does not change the asymptotic bound.

Because the current Haven is available to its owner under the invariants, $\mathcal{H}_a(t)$ is nonempty.
If no alternative is available, \asharpalg selects its current Haven and performs no ownership transfer for that assignment.
When current Havens are initialized to the baseline's fixed Havens and no ownership transfer has occurred, this coincides with the fixed-Haven \sharpalg choice.
All loops and argmin operations use deterministic task, agent, and Haven orders; equal-distance best candidates retain the first item in these orders.
The experimental protocol below specifies the orders.

\section{Theoretical Guarantees}

We state the proof for the finite-release MAPD setting above, where each task has one pickup and one delivery.
Prior \sharpalg provides the motivating safe-haven retreat mechanism, but the dynamic Haven invariant and theorem below are specific to \asharpalg.
Throughout this section, invariants are evaluated at the beginning of a timestep, after the previous execution and pending-release cleanup have finished.
The guarantee applies to graph-level finite-release MAPD instances with deterministic discrete-time execution, a centralized reservation table, distinct initial Havens, non-preemptive one-pickup-one-delivery tasks, maps satisfying Definition~\ref{def:haven-structure}, and SIPP soundness for all validation calls.
The assignment loop is invoked at every timestep while pending tasks remain and considers only idle or retreating agents with no pending-release old Haven.
SIPP validation is assumed sound with respect to static vertex exclusions induced by exclusive Haven sets, finite vertex-time reservations, and edge-swap constraints induced by consecutive reservations.
Its completeness assumption is limited to the quiescent segment subproblems used in Lemma~\ref{lem:quiescence} within each per-segment horizon $T_{\max}\geq\mathrm{diam}(G[F])+1$ under the same static exclusions and finite reservation constraints.
It does not model execution delays, localization errors, dynamic obstacles, or shared parking.
Committed future path reservations are respected during validation, while exclusive Haven sets provide persistent static exclusions for protected waiting locations.
The proof structure is as follows: Proposition~\ref{prop:update} handles ownership updates, Lemma~\ref{lem:invariants} preserves safety and reservation invariants for one timestep, Lemma~\ref{lem:quiescence} gives progress once the system is quiescent, and Theorem~\ref{thm:complete} combines finite releases with repeated assignment-loop invocation.

\subsection{Invariants}

\begin{definition}[Execution invariants]
At every timestep:
\begin{enumerate}
\item $\eta_t$ is injective;
\item for every agent $a$, $\eta_t(a)\in X_t(a)$;
\item for any $a\neq b$, $X_t(a)\cap X_t(b)=\emptyset$; and
\item $X_t(a)\setminus\{\eta_t(a)\}$ contains at most one pending-release old Haven, which is removed by the post-execution cleanup immediately after $a$ departs from it.
\end{enumerate}
\end{definition}

\begin{definition}[Reservation invariants]
At every timestep:
\begin{enumerate}
\item at most one agent reserves any vertex-time pair;
\item no committed directed moves form an edge swap;
\item for every agent $a$, $\mathrm{pos}_a(t)$ matches the vertex reserved for $a$ at time $t$, and each executed transition follows the corresponding reserved transition, which is either a wait or an edge in $G$;
\item every non-idle agent has a finite committed path ending at its current Haven;
\item idle agents wait at their current Havens, which are persistently protected by their exclusive sets; and
\item (reservation-exclusion) for any agents $a\neq b$ and any $h\in X_t(a)$, agent $b$ has no committed reservation occupying $h$ at any time $t'\geq t$.
\end{enumerate}
\end{definition}

\begin{proposition}[Dynamic Haven update preserves exclusive-set invariants]
\label{prop:update}
Assume the execution invariants hold before an assignment at time $t$.
Assume also that the selected eligible agent $a$ has no pending-release old Haven before this assignment.
If $a$ selects an available Haven $h_{\mathrm{new}}$ and applies the ownership update in Algorithm~\ref{alg:asharp-workshop}, then the execution invariants continue to hold.
\end{proposition}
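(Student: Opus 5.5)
The plan is a direct case analysis on the ownership update in Algorithm~\ref{alg:asharp-workshop}, checking each of the four execution invariants in turn. The update touches only $\eta_t(a)$ and $X_t(a)$ for the single selected agent $a$; for every $b\neq a$, $\eta_t(b)$ and $X_t(b)$ are unchanged. It therefore suffices to verify invariants that involve $a$: injectivity between $a$ and others, $\eta_t(a)\in X_t(a)$, disjointness of $X_t(a)$ from every $X_t(b)$, and the one-pending-Haven bound for $a$. The trivial case $h_{\mathrm{new}}=h_{\mathrm{old}}$ is dispatched first. There the algorithm leaves $\eta_t(a)$ and $X_t(a)$ unchanged, so the post-update state equals the pre-update state and the invariants hold by assumption.

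In the main case $h_{\mathrm{new}}\neq h_{\mathrm{old}}$, the new state is $\eta_t(a)=h_{\mathrm{new}}$. The new exclusive set is $X_t(a)=\{h_{\mathrm{new}}\}$, or $\{h_{\mathrm{new}},h_{\mathrm{old}}\}$ when $\mathrm{pos}_a(t)=h_{\mathrm{old}}$. Invariant~2 is immediate since $h_{\mathrm{new}}\in X_t(a)$ by construction. For invariant~1, I use condition~1 of Definition~\ref{def:available-haven}: $h_{\mathrm{new}}\notin\bigcup_{b\neq a}X_t(b)$. Since $\eta_t(b)\in X_t(b)$ by the pre-update invariant~2, it follows that $h_{\mathrm{new}}\neq\eta_t(b)$ for all $b\neq a$, and injectivity is preserved. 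For invariant~3, $h_{\mathrm{new}}$ is disjoint from the other exclusive sets by the same availability condition. The possibly retained $h_{\mathrm{old}}$ was already in the pre-update $X_t(a)$, namely as $\eta_t(a)$, and that set was disjoint from every $X_t(b)$ by the pre-update invariant~3. Hence the new $X_t(a)$, being a subset of the old $X_t(a)\cup\{h_{\mathrm{new}}\}$, stays disjoint from all $X_t(b)$.

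Invariant~4 is the step needing the hypothesis that $a$ has no pending-release old Haven. Without it, the pre-update $X_t(a)$ could contain some extra $h'$, and one might worry about accumulating two pending Havens. The algorithm, however, overwrites $X_t(a)$ with $\{h_{\mathrm{new}}\}$ rather than adding to it, so any earlier extra element is discarded. I would still invoke the hypothesis to ensure that discarding $h'$ does not expose an occupied vertex: the agent is not located at any earlier pending Haven, since $a$ is eligible. Then $X_t(a)\setminus\{\eta_t(a)\}\subseteq\{h_{\mathrm{old}}\}$ has at most one element.

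I expect the main obstacle to be the clause of invariant~4 stating that the pending Haven ``is removed by the post-execution cleanup immediately after $a$ departs.'' That clause is partly a statement about future behaviour. I would argue it from the final line of Algorithm~\ref{alg:asharp-workshop}, which releases $h_{\mathrm{old}}$ exactly when $\mathrm{pos}_a(t+1)\neq h_{\mathrm{old}}$. Until then, $a$ is task-executing with a pending Haven and hence excluded from $I$. So no further assignment to $a$ can alter $X_t(a)$ before cleanup, and later commitments to other agents within the same phase do not modify $X_t(a)$. This also covers sequential commitments within one decision phase: each commitment is applied to the current $S_t^{(k)}$, the availability test is re-evaluated against that updated state, and so the argument above applies inductively over $k$.
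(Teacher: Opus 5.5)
Your proposal is correct and follows essentially the same route as the paper's proof. Both split into the no-op case $h_{\mathrm{new}}=h_{\mathrm{old}}$ and a genuine transfer, use availability for injectivity and disjointness, use pending release to keep a still-occupied $h_{\mathrm{old}}$ protected, and use the eligibility rule (exclusion from $I$ until post-departure cleanup) for the one-pending-Haven bound. You are somewhat more explicit than the paper: you derive injectivity from availability via the pre-update $\eta_t(b)\in X_t(b)$, and you get the bound from the overwrite $X_t(a)\gets\{h_{\mathrm{new}}\}$. Your aside about discarding an earlier pending $h'$ is vacuous, though, since the hypothesis already rules out any such $h'$.
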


The proof separates the no-op case $h_{\mathrm{new}}=h_{\mathrm{old}}$ from a transfer and uses availability, pending release, and the eligibility rule to preserve injectivity, disjointness, and the one-pending-Haven bound.
Full details appear in Appendix~\ref{app:proofs}.

\begin{lemma}[One-timestep invariant preservation]
\label{lem:invariants}
If the execution and reservation invariants hold at the beginning of timestep $t$, then after Algorithm~\ref{alg:asharp-workshop} completes its assignment phase, executes one reserved transition, and performs pending-release cleanup, the invariants hold at the beginning of timestep $t+1$.
\end{lemma}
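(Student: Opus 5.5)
We prove the lemma by following the timestep in its three sub-phases: assignment, execution, cleanup. For the assignment phase we induct on the number $k$ of successful commitments. The claim is that each phase-local state $S_t^{(k)}$ satisfies all execution invariants and reservation invariants~1, 2, 4, 5 and~6, with invariant~3 read as ``the time-$t$ reservation of every agent equals $\mathrm{pos}_a(t)$ and all reserved transitions are waits or edges.'' The base case $S_t^{(0)}$ is the hypothesis. A failed pass changes nothing, because \textsc{PlanFullPath} is side-effect free. So it suffices to show that one successful commitment $(a,\tau,h,\pi)$ preserves the invariants.

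\textbf{Execution invariants.} The selected agent is eligible, so it has no pending-release old Haven, and $h$ is available in the current phase-local state. Proposition~\ref{prop:update} therefore applies directly, giving injectivity, $\eta(a)\in X(a)$, disjointness, and the one-pending-Haven bound.

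\textbf{Reservation invariants.} These are checked one at a time.
\begin{itemize}
\item \emph{Invariants~1 and~2.} SIPP soundness means $\pi$ avoids every other agent's vertex-time reservations and edge swaps with their directed moves. Agent $a$'s own reservations at times $t'>t$ are deleted before $\pi$ is inserted, and its time-$t$ entry is kept and coincides with the start of $\pi$. Hence no pair is doubly reserved and no swap is created. Stale self-reservations cannot conflict because they no longer exist.
\item \emph{Invariant~3.} $\pi$ starts at $\mathrm{pos}_a(t)$ and consists of waits and edges.
\item \emph{Invariant~4.} $\pi$ is finite and ends at $h=\eta(a)$.
\item \emph{Invariant~5.} This holds for all other agents because nothing about them changed. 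Agent $a$ is now non-idle.
\end{itemize}

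\textbf{Reservation-exclusion (invariant~6).} This is where the two directions matter, and it is the main obstacle.
\begin{itemize}
\item \emph{Other agents cannot reserve $a$'s protected set.} For $b\neq a$, we need $b$ to have no reservation at any $t'\ge t$ on $X(a)$. For the new Haven $h$, this is exactly availability condition~2. For the pending old Haven $h_{\mathrm{old}}$, it follows from the old invariant~6, since $h_{\mathrm{old}}\in X(a)$ before the commitment. Other agents' reservations were not modified in this pass.
\item \emph{$a$'s new path cannot enter others' protected sets.} For $h'\in X(b)$ we need $a$ to have no reservation on $h'$. SIPP treats $\bigcup_{b\neq a}X(b)$ as static exclusions, so $\pi$ avoids it after time $t$. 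At time $t$ itself, $\mathrm{pos}_a(t)\in X(b)$ is impossible. That would require $a$ to occupy a Haven owned by $b$, and under the old invariants $a$'s time-$t$ reservation already satisfied exclusion.
\item \emph{Wait reservations at the new Haven.} Because $h\notin\bigcup_{b\neq a}X(b)$, $h$ cannot lie in another agent's protected set, so waiting there is consistent.
\end{itemize}

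\textbf{Execution step.} Every agent moves along its reserved transition. Invariant~3 at $t+1$ follows from reserved-transition consistency, and collision-freeness of the executed move follows from invariants~1 and~2. Agents that finish their path and wait at their current Haven have a persistently protected Haven, so status relabeling (task-executing $\to$ retreating $\to$ idle) preserves invariants~4 and~5. The renewal of rolling wait reservations for idle agents uses only vertices in their own $X$, which by invariant~6 and disjointness are reserved by no one else, so invariants~1 and~6 persist.

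\textbf{Cleanup.} Removing $h_{\mathrm{old}}$ from $X_{t+1}(a)$ only shrinks a set. This preserves disjointness and exclusion, and it never removes $\eta(a)$. The removal happens exactly when $\mathrm{pos}_a(t+1)\ne h_{\mathrm{old}}$, which discharges clause~4 of the execution invariants. We also need that no other agent can be on $h_{\mathrm{old}}$ at $t+1$. This holds because exclusion was in force up to and including $t+1$ at the start of the phase.

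The delicate point is the boundary time $t$ in invariant~6 together with the time-$t$ overlap. We must argue carefully that $a$'s retained time-$t$ reservation never sits on another agent's protected Haven, and that reservation times shift correctly from $t$ to $t+1$ (the obligation ``$t'\ge t$'' becomes ``$t'\ge t+1$'', which is weaker). I would first try to handle this by proving the strengthened phase-local statement above and then observing that advancing time only drops constraints.
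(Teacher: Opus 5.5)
Your proposal is correct and follows the paper's proof: a phase-local induction over successive commitments, using Proposition~\ref{prop:update} for the ownership update. Reservation invariants 1--3 follow from SIPP soundness plus atomic replacement of the suffix after time $t$. Reservation-exclusion comes from availability condition~2 for the new Haven and from the prior invariant for a pending old Haven, followed by the execution step and cleanup after departure.

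Your additional details only spell out steps the paper leaves implicit. These are the retained time-$t$ reservation not lying in another agent's exclusive set, the weakening from $t'\geq t$ to $t'\geq t+1$, and the renewal of rolling wait reservations.
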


The proof applies Proposition~\ref{prop:update} and planner soundness to each atomic commitment in phase-local order, then observes that reserved execution and post-departure cleanup preserve the resulting conditions at $t+1$.
Full details appear in Appendix~\ref{app:proofs}.

\subsection{Finite-Release Completeness}

\begin{definition}[Quiescent configuration]
The system is quiescent at time $T$ if every agent is idle at its current Haven, no finite committed movement suffix remains, and the only persistent Haven constraints are the exclusive sets.
An implementation may materialize a bounded number of wait reservations at an idle agent's Haven and renew them as time advances; these are \emph{rolling wait entries}.
They do not represent a movement suffix, are ignored when that same agent replans, and therefore do not change the quiescence condition.
\end{definition}

The next lemma does not rely on the nearest-Haven heuristic; it shows that, once traffic has quiesced, every available Haven gives a feasible full path.

\begin{lemma}[Planning succeeds in quiescence]
\label{lem:quiescence}
Assume the Haven structure conditions and SIPP completeness for the quiescent segment subproblems, under the static vertex exclusions induced by exclusive Haven sets and the finite reservation constraints, within a horizon at least $\mathrm{diam}(G[F])+1$ for each segment.
In a quiescent configuration, for any pending task $\tau$, any agent $a$, and any Haven $h$ available for $a$, \textsc{PlanFullPath} succeeds for the path $\mathrm{pos}_a(T)\to s_\tau\to g_\tau\to h$ under the reservation and exclusivity constraints stated above.
\end{lemma}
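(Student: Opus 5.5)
Our plan is to exhibit, for each of the three segments, an explicit feasible space-time path that respects every constraint present in a quiescent configuration. SIPP completeness for these quiescent segment subproblems then guarantees that each segment search returns \emph{some} feasible path within the horizon. That returned path need not coincide with ours.

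We first describe what constraints remain at a quiescent time $T$. Every agent $b\neq a$ sits idle at $\eta_T(b)$. Its only constraints are the static exclusion of $X_T(b)$ and, possibly, rolling wait entries at $\eta_T(b)$. By the execution invariants and the absence of pending releases, we have $X_T(b)=\{\eta_T(b)\}$. No agent has a finite movement suffix, so no vertex of $F$ carries any reservation. Agent $a$ itself is at $\eta_T(a)\in H$, and its own wait entries are ignored when it replans. Consequently, the only blocked vertices are Havens in $\bigcup_{b\neq a}X_T(b)$. No edge-swap constraints can arise, because other agents never move.

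Next we build the three segments inside $G[F]$ plus their endpoints. \emph{Segment 1}, $\mathrm{pos}_a(T)=\eta_T(a)\to s_\tau$: by condition 2 of Definition~\ref{def:haven-structure}, $\eta_T(a)$ has a neighbor $f\in F$. Task endpoints lie in $F$ by condition 3, so $s_\tau\in F$. Connectivity of $G[F]$ (condition 1) gives an $f$--$s_\tau$ path in $G[F]$ of length at most $\mathrm{diam}(G[F])$. The total segment length is therefore at most $\mathrm{diam}(G[F])+1\le T_{\max}$. \emph{Segment 2}, $s_\tau\to g_\tau$: both endpoints lie in $F$, so there is a path in $G[F]$ of length at most $\mathrm{diam}(G[F])$. \emph{Segment 3}, $g_\tau\to h$: pick a neighbor $f'\in F$ of $h$ and route $g_\tau\to f'$ in $G[F]$, then take one more step to $h$; the length is at most $\mathrm{diam}(G[F])+1$. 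All intermediate vertices lie in $F$, so they are unblocked at every time. The endpoint $h$ is unblocked because $h$ is available (Definition~\ref{def:available-haven}). Availability gives $h\notin X_T(b)$ for every $b\neq a$, and it also rules out any reservation of $h$ by another agent at any $t'\ge T$. The starting vertex $\eta_T(a)$ is occupied only by $a$, and $a$ leaves it on the first move, so it raises no conflict. Each segment begins at the arrival time of the previous one, and nothing is time-dependent in $F$. Hence each concatenated segment is feasible with its absolute start time shifted and stays within its relative horizon.

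The main obstacle is to state carefully that quiescence really leaves no time-indexed constraint on $F$ or on $h$. Two points need care. First, rolling wait entries are located only at other agents' own Havens, which are already statically excluded and are distinct from $h$ by disjointness. Second, edge-swap checks are vacuous because no other agent has a committed move. With these constraints enumerated, the constructed paths witness feasibility of each segment subproblem under exactly the constraint set that SIPP is assumed complete for. So each segment call succeeds, and \textsc{PlanFullPath} returns success.
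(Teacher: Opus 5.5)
Your proposal is correct and follows essentially the same argument as the paper: each segment is routed through the connected core $G[F]$, with one extra edge to leave $\eta_T(a)$ and one to enter $h$, which bounds each segment length by $\mathrm{diam}(G[F])+1$. Availability and quiescence rule out conflicting exclusions and reservations, and SIPP completeness then yields some feasible segment, not necessarily yours. Two minor differences: you handle $h=\eta_T(a)$ uniformly through agent-relative availability rather than by the paper's explicit case split, and your claim $X_T(b)=\{\eta_T(b)\}$ (true, since only a task-executing agent can hold a pending release) is not needed, because the constructed paths avoid every Haven other than $\eta_T(a)$ and $h$.
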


The proof constructs all three segments through the connected core $G[F]$, using one additional edge to leave the start Haven and one to enter the target Haven; availability and quiescence remove conflicting ownership and movement reservations.
Full details appear in Appendix~\ref{app:proofs}.

\begin{theorem}[Finite-release completeness]
\label{thm:complete}
Assume the Haven structure conditions, distinct initial Havens, initial execution and reservation invariants, deterministic execution of committed reservations, non-preemptive tasks assigned at most once, finitely many task releases, and assignment-loop invocation at every timestep while pending tasks remain using the eligibility rule in Algorithm~\ref{alg:asharp-workshop}.
Assume also SIPP soundness for all validation calls and SIPP completeness for the quiescent segment subproblems of Lemma~\ref{lem:quiescence} under static Haven exclusions and finite reservation constraints within each per-segment horizon $T_{\max}$ with $T_{\max}\geq\mathrm{diam}(G[F])+1$.
Then \asharpalg delivers every released task in finite time.
\end{theorem}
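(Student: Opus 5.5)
The proof combines Lemma~\ref{lem:invariants}, used inductively for safety, with a progress argument built on Lemma~\ref{lem:quiescence}. First, by induction on $t$ starting from the assumed initial invariants, Lemma~\ref{lem:invariants} shows that the execution and reservation invariants hold at the beginning of every timestep. Consequently, every agent always follows its reserved path, every non-idle agent has a finite committed path ending at its current Haven, and, by self-availability, $\mathcal{H}_a(t)\neq\emptyset$ whenever the assignment loop examines $a$.

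Next, fix the finite release sequence and let $T_r$ be the last release time. Assume for contradiction that some released task is never delivered. Because tasks are assigned at most once and are never preempted, any task that has been assigned is carried along a finite committed path. Deterministic execution then guarantees that it is delivered at the finite arrival time of that path. Hence some task remains pending forever.

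Call a timestep \emph{commitment-free} if the assignment loop makes no successful commitment in it. There are finitely many tasks in total and each commitment assigns one task, so only finitely many commitments occur overall. Let $T'\ge T_r$ be a time after the last commitment. Every committed path is finite, so at time $\max$ over the arrival times of the paths committed before $T'$ each agent has reached its current Haven with no task and is idle (reservation invariants 4--5). At that point any pending-release old Haven has also been cleaned up, because each agent has departed from it. Let $T''$ be this time. No finite movement suffix remains, only rolling wait entries exist, and these are ignored when the owning agent replans. The system is therefore quiescent at $T''$, and the pending set $Q$ is nonempty.

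At $T''$ the eligible set $I$ equals $A\neq\emptyset$. In the first pass, each $a\in I$ evaluates its greedy task $\tau$ and nearest available Haven $h\in\mathcal{H}_a(T'')$. Lemma~\ref{lem:quiescence} applies to this arbitrary available $h$, so \textsc{PlanFullPath} succeeds and $c^*\neq\bot$. A commitment therefore occurs at $T''>T'$, contradicting the choice of $T'$. Every task is thus eventually assigned, and so eventually delivered.

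The main obstacle is establishing quiescence at $T''$. One must check that after the last commitment no agent remains task-executing with a pending-release Haven, which would make it ineligible, and that leftover reservations do not break Lemma~\ref{lem:quiescence}'s hypotheses. This requires using reservation invariant 4 together with the cleanup rule, and noting that an agent sitting at its Haven has departed from any old Haven. Idle agents' rolling wait entries must also be handled: they block other agents only at Havens that are already protected, so availability (Definition~\ref{def:available-haven}) excludes them consistently with the lemma.
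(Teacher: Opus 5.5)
Your proposal is correct and follows essentially the same argument as the paper: assigned tasks are delivered along finite committed paths, only finitely many commitments occur, the system becomes quiescent after the last one, and Lemma~\ref{lem:quiescence} (applied to the greedily chosen task and nearest available Haven) then forces a further commitment, which is a contradiction. You make the inductive use of Lemma~\ref{lem:invariants} and the pending-release cleanup explicit where the paper leaves them implicit. One small fix: define $T''$ as the maximum of $T'$ and those arrival times, so that $T''\ge T_r$ and the never-delivered task is guaranteed to be in $Q$.
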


The proof assumes an unfinished task persists, takes a time after the last release and last successful commitment, and lets all finite committed suffixes reach quiescence.
Because task-executing agents are ineligible for reassignment, an assigned task's pickup--delivery prefix is not replaced before delivery.
Lemma~\ref{lem:quiescence} then forces another successful commitment, contradicting the choice of that time.
Full details appear in Appendix~\ref{app:proofs}.

\begin{corollary}[Fixed-Haven case]
If \asharpalg always selects the current Haven and current Havens are initialized to the baseline's fixed Havens, Haven ownership never changes.
The algorithm then recovers the restricted fixed-Haven baseline used in our experiments, and Theorem~\ref{thm:complete} applies under the same assumptions.
\end{corollary}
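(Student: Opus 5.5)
The plan is to reduce the corollary to Theorem~\ref{thm:complete} by showing that the restricted selection rule is a special case of \asharpalg that never takes the transfer branch.

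\emph{Ownership never changes.} We argue by induction over timesteps and over the phase-local commitment index $k$ in $S_t^{(k)}$. The induction hypothesis is that $\eta_t(a)$ equals the baseline's fixed Haven of $a$ and that $X_t(a)=\{\eta_t(a)\}$.

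The base case holds by the initialization assumption. In particular, $\eta_0$ is injective because initial Havens are distinct.

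For the step, suppose a commitment selects $h$. Under the corollary's hypothesis, $h=\eta_t(a)=h_{\mathrm{old}}$, so Algorithm~\ref{alg:asharp-workshop} enters the else-branch. That branch leaves $\eta_t(a)$ and $X_t(a)$ unchanged and creates no pending-release Haven. Post-execution cleanup only removes pending-release old Havens, and none exist, so it is vacuous. Hence $\eta$ and $X$ are constant across every commitment and every timestep.

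We must also check that the restricted choice is legal, that is, that the current Haven lies in $\mathcal{H}_a(t)$. This is the self-availability paragraph. By Lemma~\ref{lem:invariants}, applied inductively from the initial invariants, the execution and reservation invariants hold at each timestep boundary. Disjointness of exclusive sets and reservation-exclusion then give $\mathrm{Avail}_t(a,\eta_t(a))$. We then extend this to phase-local states by applying Proposition~\ref{prop:update} at each commitment, with the trivial no-op case.

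\emph{Recovery of the baseline.} With $\eta$ frozen at the fixed Havens, every committed path ends at the agent's dedicated initial Haven. Every other agent's static exclusions equal the fixed-Haven exclusions, and the pending-release lines are dead code. Line by line, the assignment loop then coincides with the fixed-Haven pseudocode of Appendix~\ref{app:fixed-pseudocode}: the same greedy task choice, deterministic tie-breaking, and \textsc{PlanFullPath} call with the same target. The step needing most care is this comparison with the appendix algorithm. I would match each state variable and show the two loops induce identical reservation tables by induction on passes.

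\emph{Completeness.} Theorem~\ref{thm:complete} never uses the nearest-Haven heuristic. Lemma~\ref{lem:quiescence} holds for \emph{any} available Haven, and the current Haven is available by the argument above. So the proof of Theorem~\ref{thm:complete} goes through verbatim: take a time after the last release and last commitment, let suffixes drain to quiescence, and obtain a forced commitment, which is a contradiction. It therefore applies under the same assumptions. I expect the only subtle point to be confirming that the quiescence argument supplies a candidate when the selector is restricted. It does, because Lemma~\ref{lem:quiescence} quantifies universally over available Havens.
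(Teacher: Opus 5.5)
Your argument is correct and follows the paper's own reasoning. The paper gives no separate proof; it relies on three facts you also use: selecting the current Haven is the no-op branch of the update (the $h_{\mathrm{new}}=h_{\mathrm{old}}$ case in the proof of Proposition~\ref{prop:update}), which makes the loop coincide with Algorithm~\ref{alg:sharp-workshop}; Lemma~\ref{lem:quiescence} quantifies over every available Haven; and the current Haven is self-available, so Theorem~\ref{thm:complete} carries over. One small correction: at intermediate phase-local states the current Haven's availability needs reservation-exclusion as well as disjointness, and that comes from the per-commitment argument in the proof of Lemma~\ref{lem:invariants}, not from Proposition~\ref{prop:update}, which only covers the execution invariants.
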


\section{Experiments}

\subsection{Maps and Setup}

We evaluate \asharpalg on four maps with different structural properties (Figure~\ref{fig:maps}).
The well-formed map is the public MAPD benchmark from \cite{ma2017lifelong}, reflecting Kiva-style warehouse layouts \cite{wurman2008coordinating}.
The constrained maps are narrow-biconnected (narrow-bi), narrow-biconnected with depth-1 dead ends (narrow-bi-dead), and a tree-like task-area layout; the latter two include structures excluded by well-formedness or biconnectivity assumptions.
The tree map represents a tree-structured warehouse guidepath with narrow aisles and branches that provide access to workstations and storage locations.
Such patterns arise in space-efficient automated warehouses and are studied in prior warehouse-layout and MAPD work \cite{iida2023negotiation,hirayama2025taai,azadeh2017robotized,roy2017multi}.
All maps were checked against the Haven structure conditions before evaluation: the core is connected, every Haven is adjacent to the core, and task endpoints are sampled from the core.

\begin{figure}[!htbp]
\centering
\begin{subfigure}{0.47\linewidth}
\centering
\includegraphics[width=\linewidth]{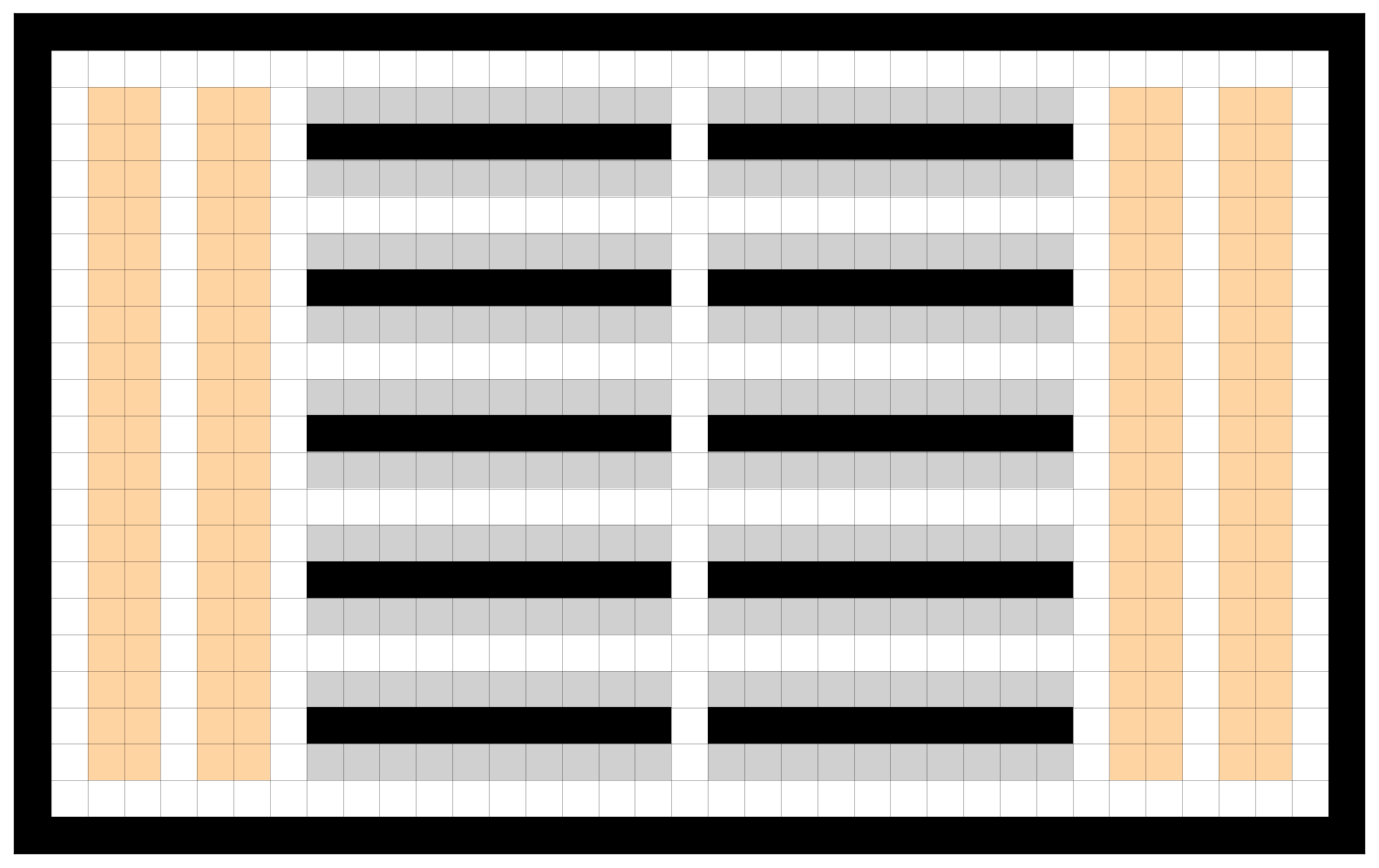}
\caption{well-formed}
\end{subfigure}
\hfill
\begin{subfigure}{0.47\linewidth}
\centering
\includegraphics[width=\linewidth]{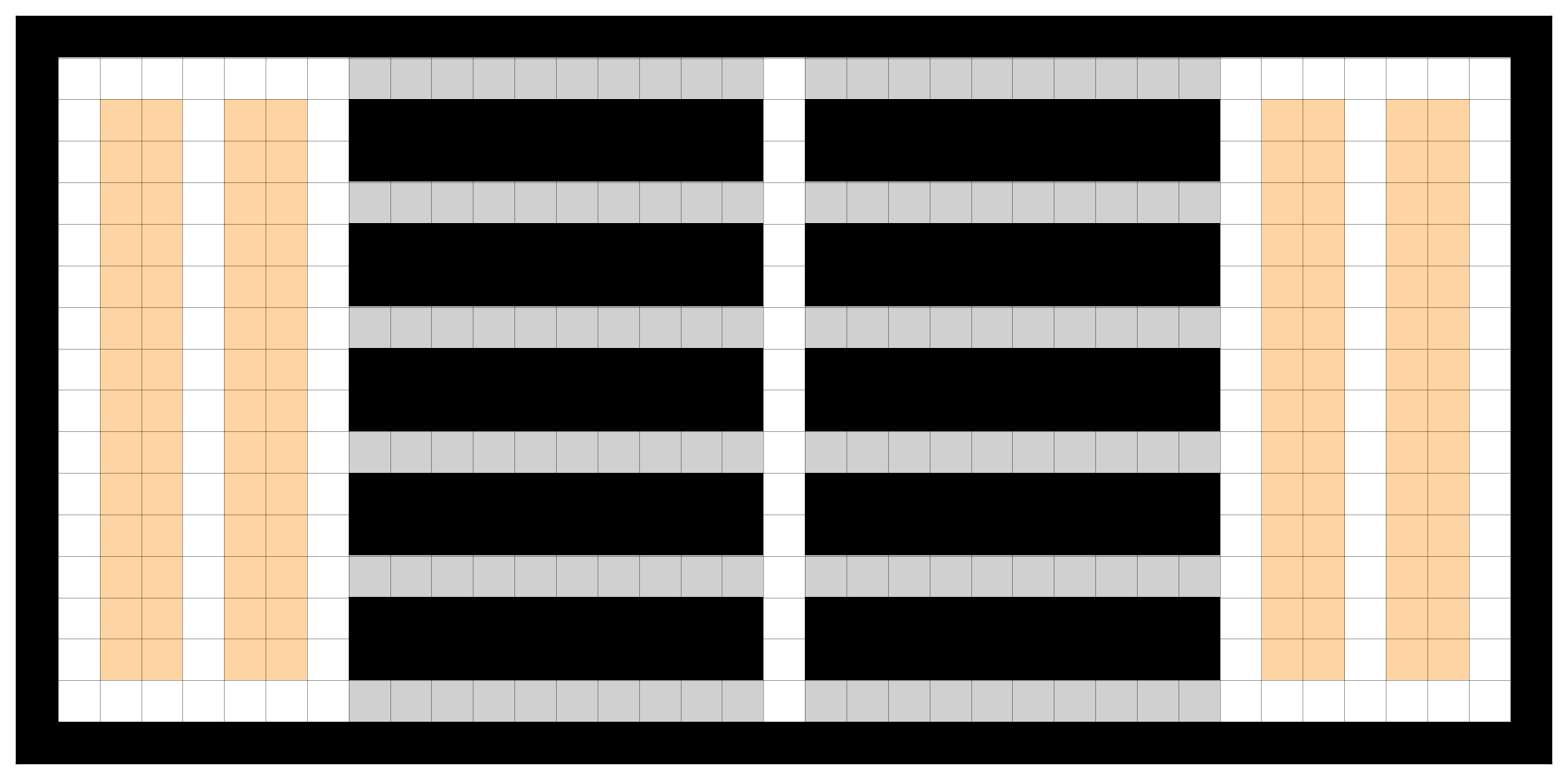}
\caption{narrow-bi}
\end{subfigure}

\begin{subfigure}{0.47\linewidth}
\centering
\includegraphics[width=\linewidth]{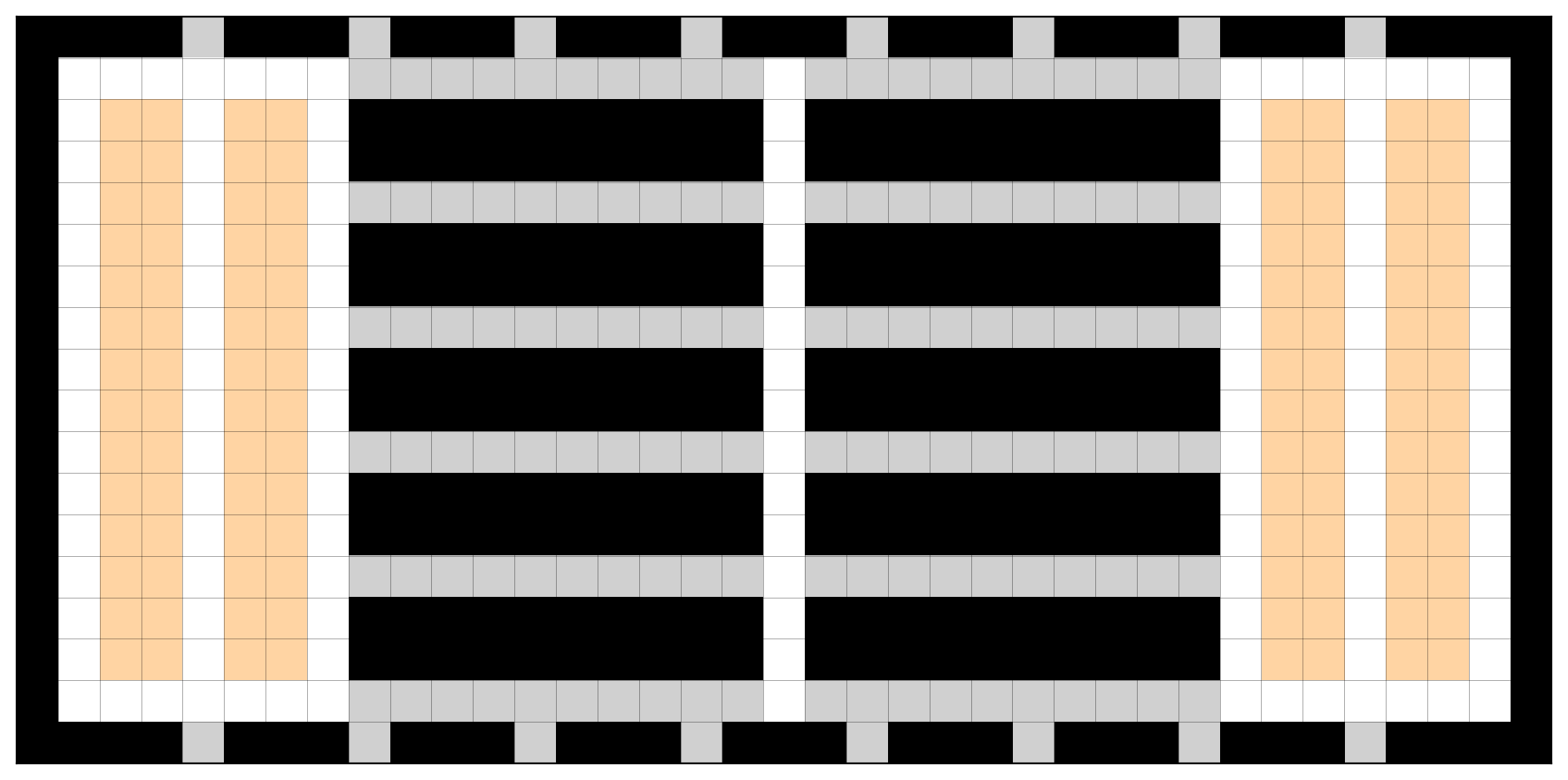}
\caption{narrow-bi-dead}
\end{subfigure}
\hfill
\begin{subfigure}{0.47\linewidth}
\centering
\includegraphics[width=\linewidth]{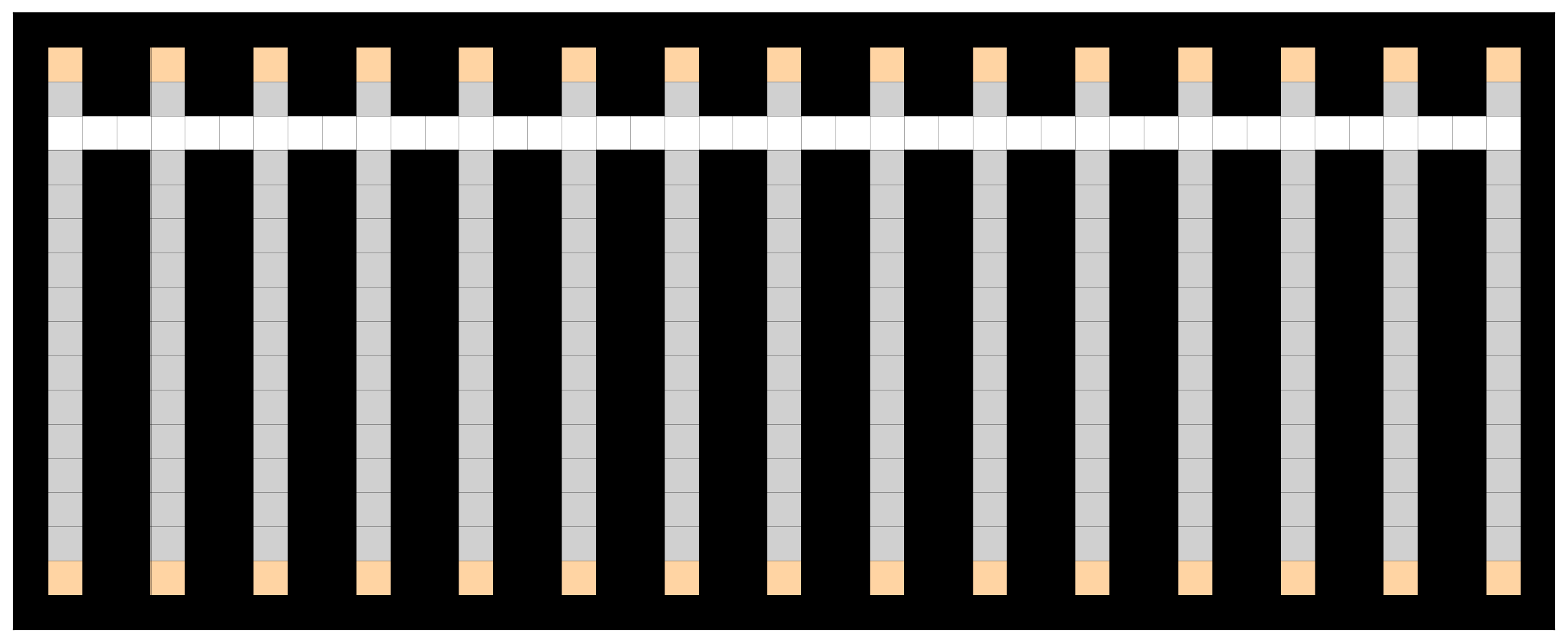}
\caption{tree}
\end{subfigure}
\caption{Maps used in the evaluation. Orange cells are Haven candidates, gray cells are task endpoint candidates, and black cells are obstacles. The well-formed map is the public benchmark, narrow-biconnected (narrow-bi) keeps biconnectivity with single-width aisles, narrow-biconnected with depth-1 dead ends (narrow-bi-dead) adds dead-end workstations, and tree violates biconnectivity with a tree-like task area.}
\label{fig:maps}
\end{figure}

\begin{table}[!htbp]
\caption{Map statistics. $P_{\mathrm{end}}$ denotes the endpoint-candidate set.}
\label{tab:map-stats}
\centering
\small
\begin{tabular}{lrrrrrrr}
\toprule
Map & Size & $|V|$ & $|F|$ & $|H|$ & $|P_{\mathrm{end}}|$ & Obstacles & $\mathrm{diam}(G[F])$ \\
\midrule
well-formed & $37{\times}23$ & 635 & 483 & 152 & 200 & 216 & 54 \\
narrow-bi & $37{\times}18$ & 360 & 248 & 112 & 120 & 306 & 49 \\
narrow-bi-dead & $37{\times}18$ & 376 & 264 & 112 & 136 & 290 & 49 \\
tree & $45{\times}18$ & 268 & 238 & 30 & 195 & 542 & 66 \\
\bottomrule
\end{tabular}
\end{table}

We vary $|A|\in\{5,10,15,20,25,30\}$ and task generation rate $\lambda\in\{0.5,1.0,1.5,2.0,2.5,3.0\}$.
For each map, task rate, and seed, we generate one schedule and replay it for every agent count and algorithm.
The resulting 2,400 unique schedules define 14,400 paired map--agent-count--rate--seed cases and 72,000 runs across TP, PIBT, PIBTTP-TA, \sharpalg, and \asharpalg.
Within each such case, all five algorithms therefore receive the same pre-generated task schedule.
For each map, agent count, and seed $s\in\{1001,\ldots,1100\}$, the implementation seeds Python's pseudorandom-number generator with $s$, sorts Haven coordinates in row-major order, samples $|A|$ distinct Havens uniformly without replacement, and assigns the sampled sequence to agents in increasing ID order.
Every algorithm uses the same initial agent--Haven mapping in a paired case; consequently, the 100 seeds vary both the initial fixed-Haven assignment and the task schedule, although these two factors are not independently crossed.
For each timestep $t\in\{0,\ldots,300\}$, the number of released tasks is $\lfloor\lambda\rfloor+\mathrm{Bernoulli}(\lambda-\lfloor\lambda\rfloor)$.
Because the maps in Table~\ref{tab:map-stats} have no explicit delivery-only cells, each task samples pickup and delivery as two distinct vertices from this endpoint-candidate set; sampling is with replacement across different tasks.

The \sharpalg baseline and \asharpalg use the same nearest-pickup task selection and path-validation rules.
The evaluated \asharpalg instantiation adds nearest-available-Haven selection and the availability-checked, pending-release ownership transfer, whereas \sharpalg retains its initial Haven.
The protocol's safety and completeness results are selector-independent under the stated assumptions, but the empirical efficiency results are specific to this nearest-Haven selector.
For both safe-haven retreat variants, the per-segment SIPP horizon is set to $T_{\max}=\mathrm{diam}(G[F])+1$ on each map; from Table~\ref{tab:map-stats}, the largest value is 67 on the tree map.
Ties are deterministic: pre-generated task schedules are ordered by release time and task ID, nearest-task ties follow task iteration order, equal-distance best feasible candidates retain the first agent in increasing agent ID order, and ties among available Havens in \asharpalg use lexicographic grid-coordinate order.

TP and PIBT are based on publicly available implementations, and PIBTTP-TA is reimplemented from \cite{fujitani2022priority}; these baselines use their own assignment and movement rules, so they serve primarily to expose the effect of structural assumptions on the same paired task streams.
Thus their role is diagnostic: they expose where standard structural assumptions fail, while the main efficiency comparison is \sharpalg versus \asharpalg.
TP, PIBT, and PIBTTP-TA respectively represent the well-formed-instance, simple-cycle, and biconnected-main-area-with-attached-trees assumptions.
All five algorithms use the same collision, stall, timeout, and task-completion criteria.

A run succeeds only if all released tasks are delivered within 10,000 timesteps.
We mark a run as failed if it has a vertex collision, edge-swap collision, invalid non-adjacent move, 3,000-second per-run wall-clock timeout, or 1,000-timestep stall without task progress.
These safeguards are treated as operational failures in the simulation.
We report success rate as the fraction of runs that complete all released tasks, makespan as the final delivery timestep rather than the time for all agents to return to Havens, service time as the average release-to-delivery duration over completed tasks, and computation time as simulator wall-clock milliseconds per step.
Absolute values reported with $\pm$ are means plus or minus one standard deviation over 100 paired seeds.
The retreat-to-Haven suffix remains part of each committed path; final-delivery makespan is used only as the service-performance metric.

Under an industrial reporting constraint, we do not tabulate the author-measured absolute makespan and service-time values for the constrained maps; absolute values are reported for the public well-formed benchmark, and computation times are reported for all maps.
For the constrained maps, we report success rates, paired relative improvements, normalized trends, and statistical outcomes without disclosing normalization anchors in absolute units.
This reporting restriction does not prevent independent reproduction: the released implementation, machine-readable maps, task generator, seed protocol, and experiment configuration allow readers to rerun the experiments and obtain their own measurements.
Paired performance comparisons are made only among algorithms with 100\% success in a configuration, avoiding survivorship bias.
For statistical testing, samples are paired by random seed within each map, agent count, task generation rate, and metric.
The prespecified primary comparison is \sharpalg versus \asharpalg on the 138 Haven-surplus configurations, defined by $|A|<|H|$.
For makespan and service time separately, we apply two-sided paired Wilcoxon signed-rank tests and Holm correction across all 138 configuration-level tests; zero differences are omitted, no continuity correction is used, and SciPy~1.11.3 with \texttt{method=auto} uses its asymptotic calculation for these 100-pair tests.
We use significance level $\alpha=0.05$ throughout; comparisons involving TP, PIBT, and PIBTTP-TA are descriptive diagnostics rather than members of the primary testing family.

\subsection{Success and Task Performance}

Figure~\ref{fig:success} shows the success-rate heatmap.
TP, PIBT, and PIBTTP-TA are included as structural-assumption baselines; the heatmap should be read primarily as a stress test of their graph assumptions.
The figure is intended to be read categorically: green cells denote completed configurations, while red or orange cells identify configurations where structural assumptions or operational safeguards fail in this diagnostic benchmark.
The \sharpalg baseline and \asharpalg both achieve 100\% success in every configuration, supporting the claim that dynamic Haven updates preserve the robustness of safe-haven retreat.

\begin{figure}[!htbp]
\centering
\includegraphics[width=\linewidth]{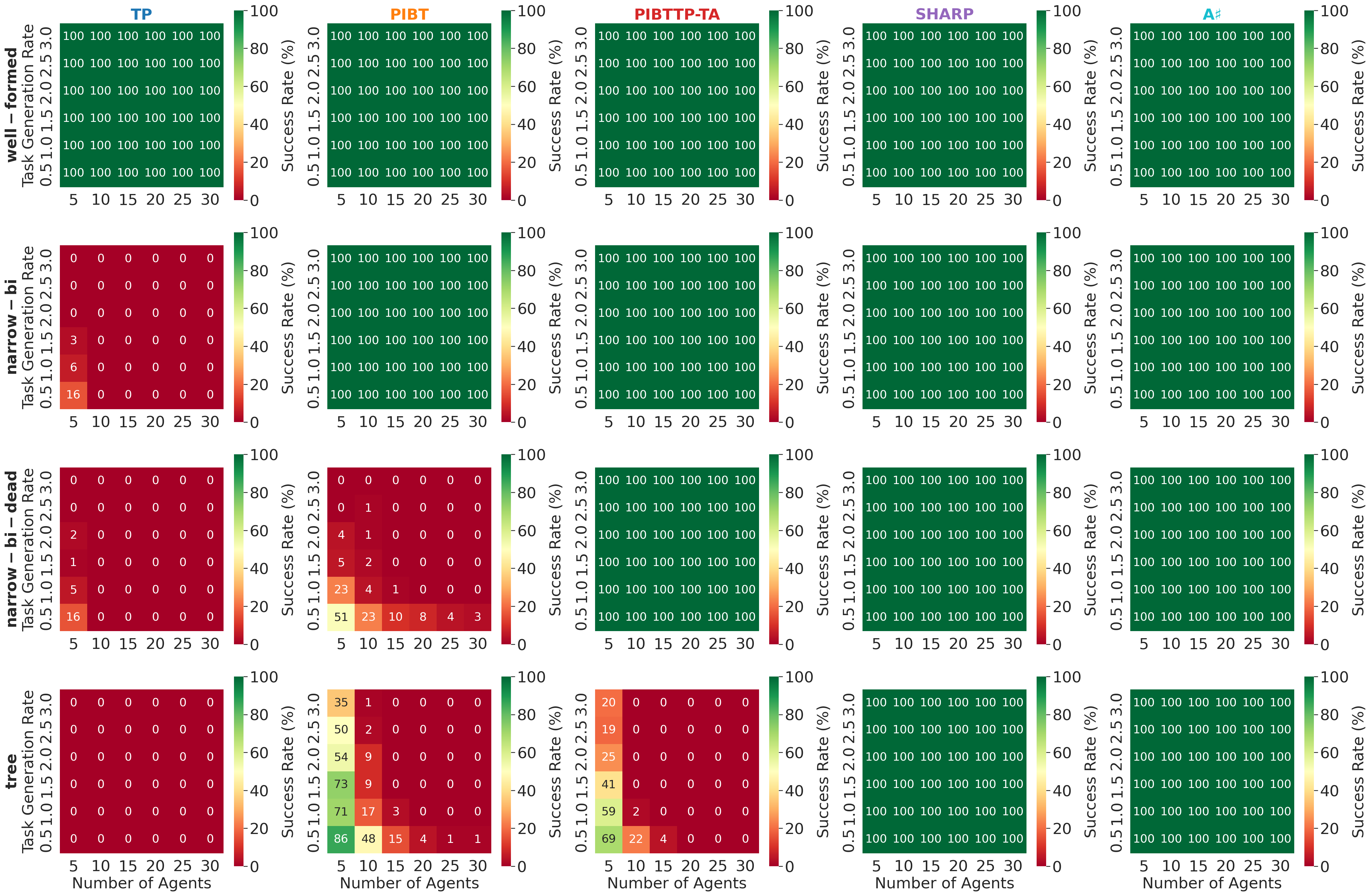}
\caption{Success rate across the four maps, agent counts, and task generation rates. Cell values give success rates, and color only highlights high versus low rates. TP, PIBT, and PIBTTP-TA serve as structural-assumption diagnostics; the main efficiency comparison is \sharpalg versus \asharpalg. Only \sharpalg and \asharpalg achieve 100\% success in every tested configuration.}
\label{fig:success}
\end{figure}

Figures~\ref{fig:service-performance} and~\ref{fig:makespan-performance} summarize service-time and makespan trends.
Values are normalized per map and metric by an endpoint anchor \sharpalg mean: the task-rate sweeps use the high-rate endpoint ($|A|=30,\lambda=3.0$), and the agent-count sweeps use the left endpoint at the same task rate ($|A|=5,\lambda=3.0$).
These endpoint anchors keep each sweep normalized within the same map and metric, rather than mixing task-rate and agent-count effects.
For each algorithm, we plot only configurations in which it completes all 100 paired seeds, matching \sharpalg; failed configurations are omitted from these trend plots.
For constrained maps, normalized values are shown without reporting the absolute \sharpalg anchors in the paper.
On the public well-formed map with 30 agents and $\lambda=3.0$, \sharpalg obtains makespan $695.2\pm16.8$ and service time $174.8\pm5.7$, while \asharpalg obtains makespan $682.6\pm15.7$ and service time $173.8\pm6.3$.
The improvement is modest on the public well-formed map, as expected, because fixed Havens are less harmful in this layout (1.8\% makespan and 0.6\% service-time reduction in this setting).
The effect is larger in constrained layouts: on the tree map with 20 agents and $\lambda=3.0$, \asharpalg reduces makespan and service time by 20.4\% and 16.2\%, respectively.

Table~\ref{tab:effect-summary} breaks down the primary \sharpalg/\asharpalg comparison for Haven-surplus configurations ($|A|<|H|$).
Haven surplus guarantees that at least one candidate is not a current Haven, but not that an alternative is available at every decision because ownership and future reservations can temporarily exclude candidates.
Configurations with $|A|=|H|$ are excluded because \asharpalg then makes the same retreat-target choice as \sharpalg in the tested setup.
The map/metric-level median $\Delta$ is the primary effect-size summary.
Across the two metric-wise families, 199 of the 276 map--metric--configuration pairs show a significant difference after Holm correction within their 138-test family; 175 favor \asharpalg and 24 favor \sharpalg.
The significant \sharpalg wins occur only for service time on the two narrow-biconnected maps, where nearest-Haven selection can create local congestion even though it shortens retreat targets.
\asharpalg is therefore not uniformly better for service time: in several narrow-biconnected configurations, the greedy nearest-Haven selector favors the fixed-Haven \sharpalg baseline.
These losses isolate a selector-level trade-off rather than a failure of the ownership-transfer protocol.
On the tree map, \asharpalg is significantly better in all 30 configurations with $|A|<|H|$ for both metrics.
When $|A|=|H|$ on the tree map, no alternative Haven is available and \asharpalg uses the same retreat-target choice as \sharpalg.

\begin{table}[!htbp]
\caption{Primary \sharpalg/\asharpalg comparison on Haven-surplus configurations ($|A|<|H|$). $\Delta=100(\mathrm{fixed}-\mathrm{adaptive})/\mathrm{fixed}$, so positive values favor \asharpalg. Only relative effects are reported. The last column counts significant \asharpalg wins, significant \sharpalg wins, and non-significant pairs after Holm correction across all 138 tests separately for each metric.}
\label{tab:effect-summary}
\centering
\small
\begin{tabular}{llrrc}
\toprule
Map & Metric & Configs & Median $\Delta$ & Sig.\ \asharpalg / \sharpalg / not sig. \\
\midrule
well-formed & makespan & 36 & +1.5\% & 33 / 0 / 3 \\
well-formed & service time & 36 & +1.0\% & 21 / 0 / 15 \\
narrow-bi & makespan & 36 & +1.0\% & 17 / 0 / 19 \\
narrow-bi & service time & 36 & -0.7\% & 8 / 15 / 13 \\
narrow-bi-dead & makespan & 36 & +1.6\% & 27 / 0 / 9 \\
narrow-bi-dead & service time & 36 & 0.0\% & 9 / 9 / 18 \\
tree & makespan & 30 & +16.7\% & 30 / 0 / 0 \\
tree & service time & 30 & +15.3\% & 30 / 0 / 0 \\
\bottomrule
\end{tabular}
\end{table}

\begin{figure}[!htbp]
\centering
\begin{subfigure}{0.48\textwidth}
\centering
\includegraphics[width=\linewidth]{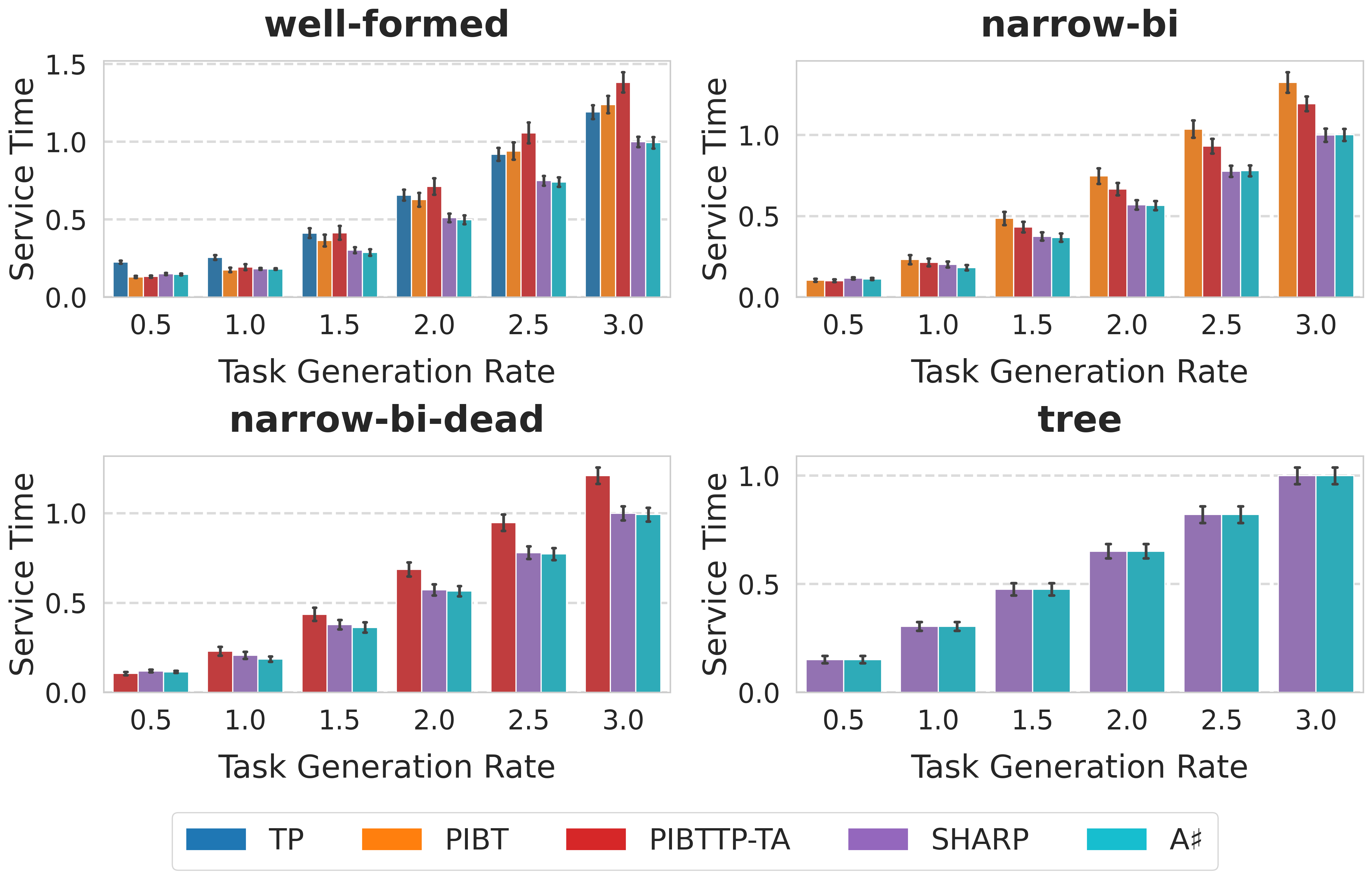}
\caption{Service time vs.\ task generation rate (30 agents).}
\end{subfigure}
\hfill
\begin{subfigure}{0.48\textwidth}
\centering
\includegraphics[width=\linewidth]{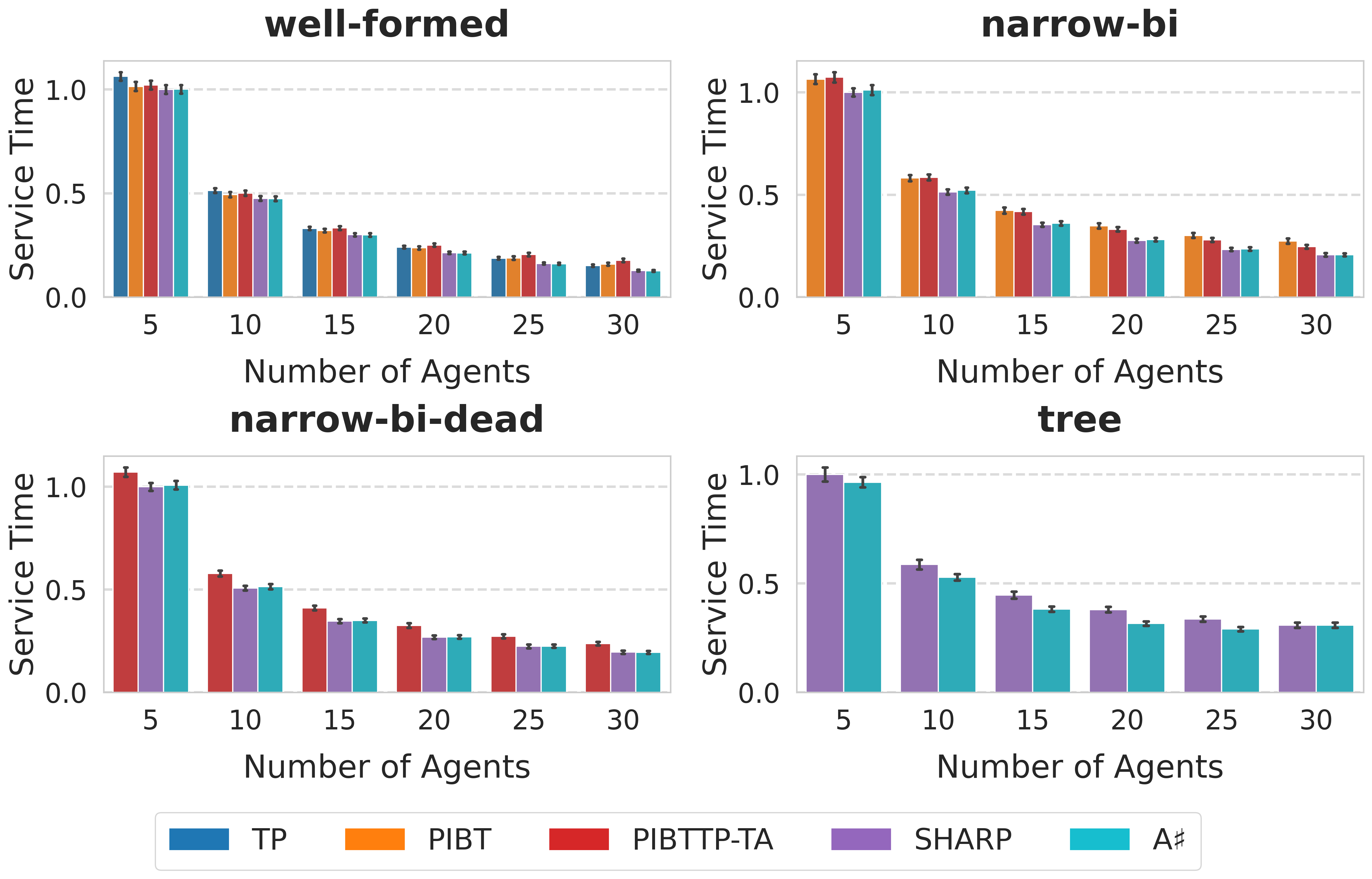}
\caption{Service time vs.\ agents ($\lambda=3.0$).}
\end{subfigure}
\caption{Normalized service time, divided by the \sharpalg anchor mean for each map and aggregation axis. Lower is better. Error bars denote one standard deviation over paired seeds; absolute anchors for constrained maps are not reported in the paper.}
\label{fig:service-performance}
\end{figure}

\begin{figure}[!htbp]
\centering
\begin{subfigure}{0.48\textwidth}
\centering
\includegraphics[width=\linewidth]{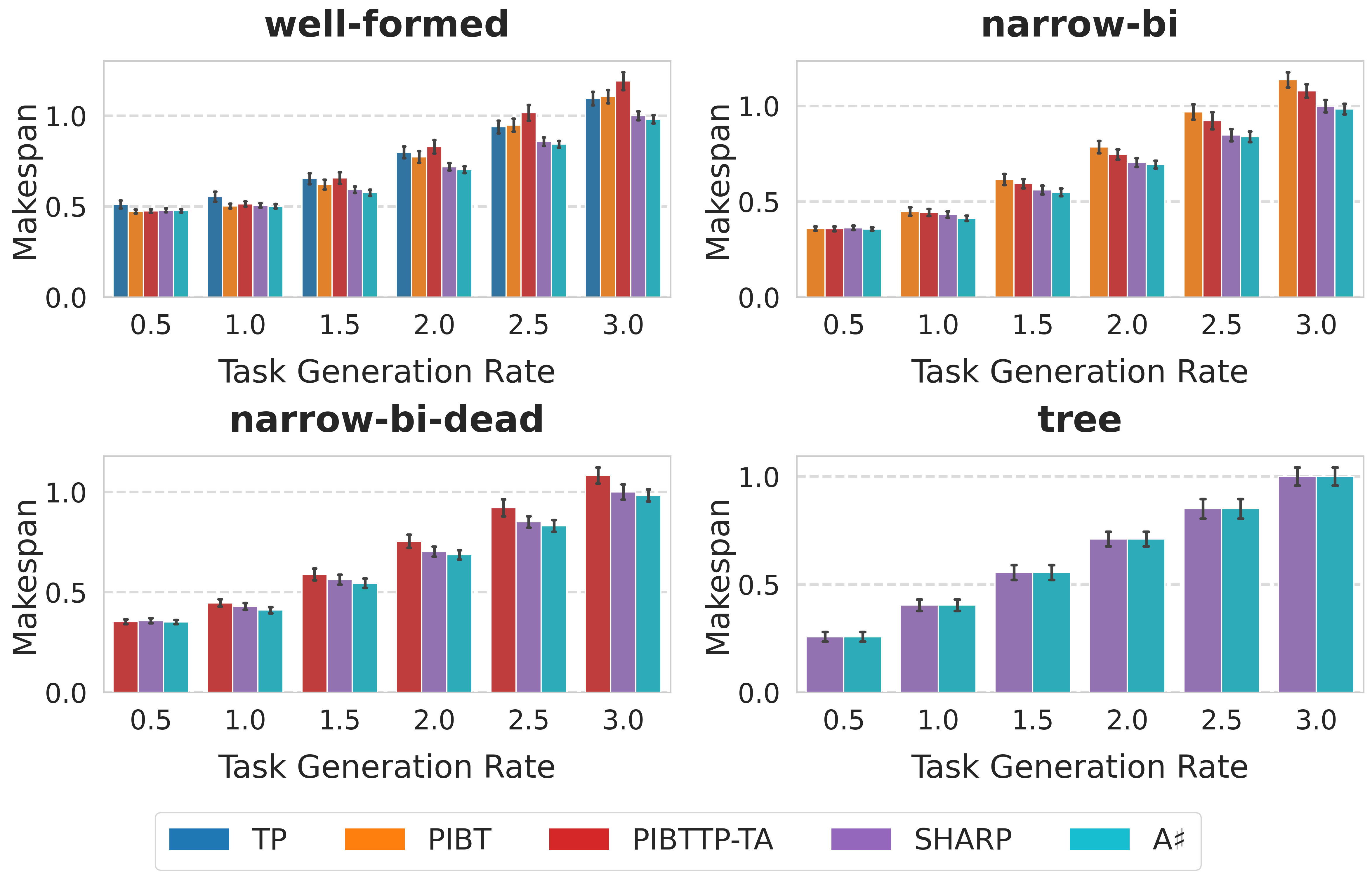}
\caption{Makespan vs.\ task generation rate (30 agents).}
\end{subfigure}
\hfill
\begin{subfigure}{0.48\textwidth}
\centering
\includegraphics[width=\linewidth]{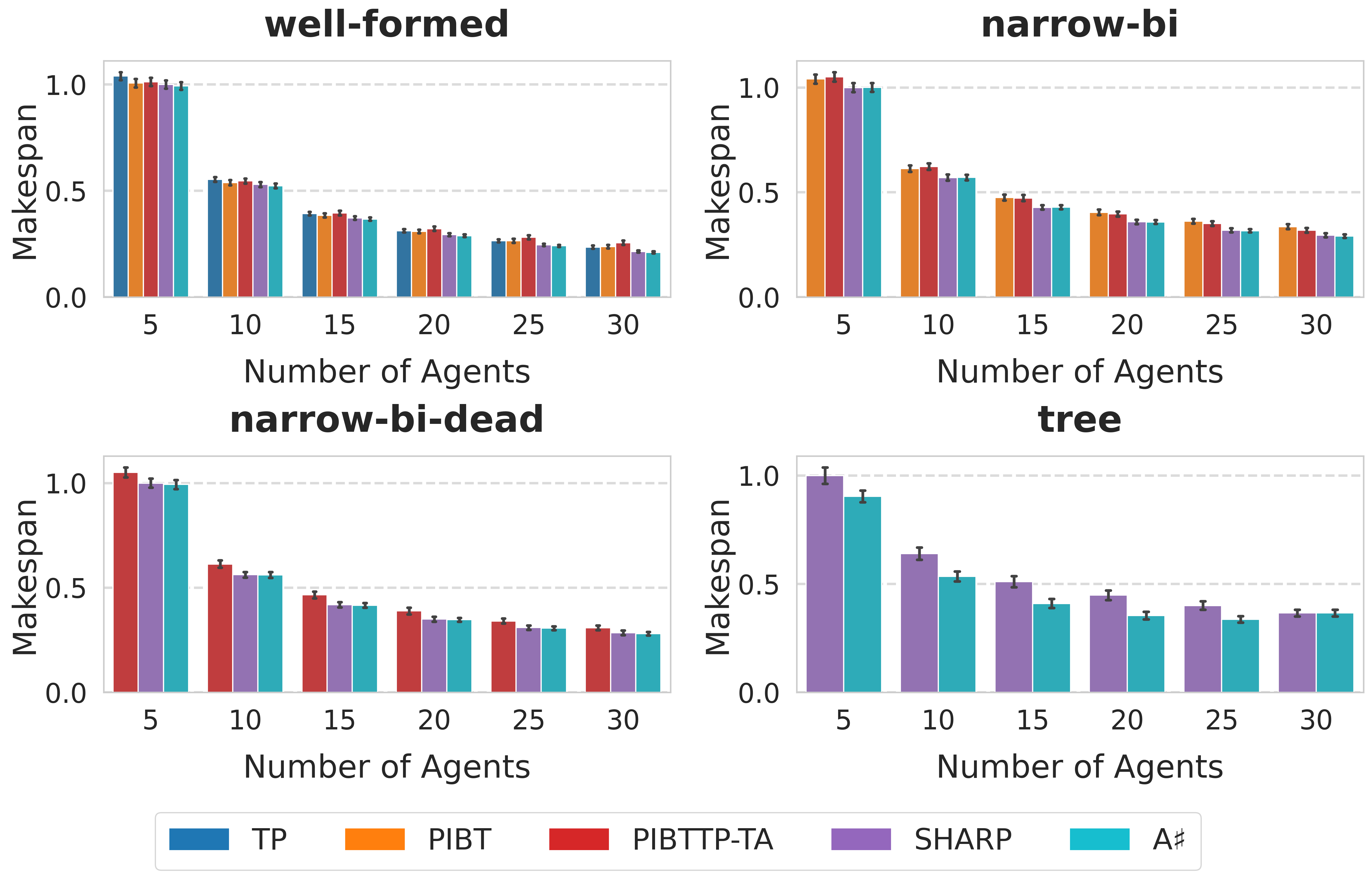}
\caption{Makespan vs.\ agents ($\lambda=3.0$).}
\end{subfigure}
\caption{Normalized makespan under the same per-map, per-axis anchor normalization as Figure~\ref{fig:service-performance}. Lower is better. Error bars denote one standard deviation over paired seeds; absolute anchors for constrained maps are not reported in the paper.}
\label{fig:makespan-performance}
\end{figure}

\subsection{Computation Time}

Table~\ref{tab:time} reports representative computation times.
In these representative settings, \asharpalg is not consistently more expensive than \sharpalg; the results are consistent with shorter retreat commitments offsetting some additional Haven-availability checks.
The only slowdown shown here occurs in the tree setting with $|A|=|H|=30$, where \asharpalg has no alternative Haven and the remaining difference is overhead.
Unlike Table~\ref{tab:effect-summary}, these timing ratios include all 144 \sharpalg/\asharpalg configurations, including those with no alternative Haven.
Across all 144 \sharpalg and \asharpalg configurations, the median \asharpalg/\sharpalg ratio is 0.83 for milliseconds per step, with a range from 0.40 to 1.03.
Because \asharpalg often shortens the number of simulated timesteps, the corresponding median ratio for total simulator computation time per completed run is 0.81, with a range from 0.33 to 1.03.
All algorithms were evaluated in the same Python simulator using paired task sequences, without GPU acceleration; reported times are simulator wall-clock measurements and should be interpreted as implementation-level timing rather than asymptotic evidence or deployed-product performance.
The host environment was Ubuntu~24.04.3 under WSL2 on an Intel Core i7-14650HX, with 24 logical processors exposed and 15~GiB RAM.
Simulation jobs ran in an Ubuntu~22.04 Docker container with Python~3.11; each simulator process was single-threaded, and independent runs were parallelized.

\paragraph{Research artifact.}
The source code, machine-readable maps, paired-task generator, experiment scripts, and reproduction instructions are publicly available at \url{https://github.com/abigworld1/A-SHARP}.
The source code is provided for evaluation and research purposes under the repository-specific terms.

\begin{table}[!htbp]
\caption{Representative computation time in ms/step. Dashes indicate failure before all tasks are completed.}
\label{tab:time}
\centering
\begin{tabular}{lrrrrr}
\toprule
Map / setting & TP & PIBT & PIBTTP-TA & \sharpalg & \asharpalg \\
\midrule
well-formed, 10 agents, $\lambda=1.0$ & 0.99 & 0.65 & 0.70 & 2.20 & 1.96 \\
well-formed, 30 agents, $\lambda=3.0$ & 6.33 & 2.04 & 2.36 & 26.2 & 19.6 \\
tree, 10 agents, $\lambda=1.0$ & -- & -- & -- & 15.1 & 5.98 \\
tree, 30 agents, $\lambda=3.0$ & -- & -- & -- & 137.5 & 141.4 \\
\bottomrule
\end{tabular}
\end{table}
\FloatBarrier

\section{Discussion and Limitations}

\paragraph{Confidential task-performance values.}
To avoid misinterpretation as performance specifications of an affiliated company's commercial products, we do not report our measured absolute makespan and service-time values on the constrained maps.
The constrained-map task-performance results should therefore be interpreted as paired relative algorithmic comparisons in simulation rather than benchmarks of a deployed product; the public well-formed benchmark and computation times are reported in absolute units.
The public artifact nevertheless permits independent reruns; values obtained in another environment are not product-performance claims.

\paragraph{Map and demand coverage.}
Each structural category is represented by one map, and the 100 seeds vary initial Haven assignments and task sequences rather than map topology.
Task endpoints are sampled uniformly from the stated candidate set.
The results therefore establish behavior on the four tested layouts and demand model, not across a distribution of tree-like or narrow-biconnected topologies or nonuniform warehouse demand patterns.

\paragraph{Execution model.}
The guarantee assumes deterministic discrete-time execution and a centralized reservation table.
Unexpected delays, mechanical failures, and localization errors are outside the current theorem.
Existing robust, dynamic-environment, and external-agent MAPD work addresses related disturbances and moving agents outside the planned team \cite{lodigiani2023robust,flammini2024deadlocks,bonalumi2025external}.
Handling such disturbances in \asharpalg would require reservation repair, temporal slack, or online replanning policies layered on top of the present ownership-transfer mechanism.

\paragraph{Haven density.}
The current framework assumes distinct Havens and therefore $|A|\leq |H|$.
This is natural when robots have dedicated parking or charging cells, but denser fleets would require shared parking, queueing, or buffer-cell rules.

\paragraph{Heuristic choice.}
Nearest-pickup task selection and nearest-available-Haven selection are simple heuristics.
They are not the safety contribution; congestion-aware selection can be substituted as long as candidate Havens satisfy availability and commitments preserve the same invariants.
The service-time losses on some narrow-biconnected configurations are consistent with this limitation: a locally nearest Haven can attract traffic to a constrained area even when the ownership-transfer protocol itself remains safe.
This pattern points to replacing the selector, not to weakening the availability and pending-release protocol.
The same protocol can also constrain learned or optimization-based task and Haven selectors, provided they preserve the same commitment semantics and return only available Havens.

\section{Conclusion}

\asharpalg addresses the main difficulty created by dynamic retreat targets: future path reservations and persistent Haven ownership must be updated together.
The availability check and pending-release rule provide this ownership-transfer protocol while keeping the safety structure of safe-haven retreat planning intact.
Under explicit Haven structure conditions and SIPP planning assumptions, we proved that \asharpalg preserves exclusivity and reservation invariants and delivers every task in any finite release sequence.
In the experiments, both \sharpalg and \asharpalg achieved 100\% success across all tested configurations, showing that dynamic Haven ownership transfer did not weaken the robustness of the fixed-Haven baseline.
For makespan, \asharpalg was significantly better in 107 of 138 Haven-surplus configurations and never significantly worse than \sharpalg after Holm correction across all 138 primary comparisons; the largest map-level median gain was 16.7\% on the tested tree map.
These results show that completion-oriented safe-haven retreat can be made dynamic in narrow or dead-end-heavy layouts, while leaving congestion-aware, learned, or optimization-based Haven selection as a natural next step.

\appendix
\section{Detailed Proofs}
\label{app:proofs}

\begin{proof}[Proof of Proposition~\ref{prop:update}]
Let $h_{\mathrm{old}}=\eta_t(a)$.
If $h_{\mathrm{new}}=h_{\mathrm{old}}$, no ownership transfer occurs, so injectivity and disjointness are unchanged and no new pending-release Haven is created.
Otherwise, availability ensures $h_{\mathrm{new}}$ is not in another agent's exclusive set, so adding it to $X_t(a)$ and setting $\eta_t(a)=h_{\mathrm{new}}$ preserves injectivity and disjointness.
If $a$ is still located at $h_{\mathrm{old}}$, pending release keeps that vertex in $X_t(a)$, so other agents continue to treat it as blocked.
If $a$ has already left $h_{\mathrm{old}}$, releasing it cannot allow another agent to plan through a vertex occupied by $a$.
The update creates at most one pending-release old Haven for $a$ because the selected agent is removed from $I$ after a successful commitment and therefore cannot receive a second ownership update in the same assignment loop.
Across timesteps, a pending old Haven exists only until $a$ departs; post-execution cleanup then removes it before $a$ becomes eligible again.
\end{proof}

\begin{proof}[Proof of Lemma~\ref{lem:invariants}]
During the assignment phase at time $t$, each successful commitment is applied to the phase-local state before the next candidate is evaluated.
The argument below first shows that these phase-local execution and reservation conditions are preserved after each commitment, and then applies the execution and cleanup step to obtain the beginning-of-timestep invariants at $t+1$.
Consider one successful commitment inside the assignment loop.
\textsc{PlanFullPath} validates a candidate full path that starts at $\mathrm{pos}_a(t)$, uses only wait actions and graph edges, ends at the selected Haven, and avoids other agents' reservations and exclusive Havens.
Agent $a$'s replaceable vertex reservations for times $t'>t$ are ignored only during validation; commitment atomically removes those reservations and inserts the validated path while keeping the time-$t$ vertex reservation fixed.
The induced directed moves from time $t$ onward are exactly the consecutive entries of the inserted path.
Thus no stale self-reservations remain, the position-reservation consistency for $a$ is re-established by the inserted path, and planner soundness preserves vertex and edge-swap reservation constraints.
The selected agent is eligible, so it has no pending-release old Haven before the commitment by the state classification above.
Proposition~\ref{prop:update} handles the corresponding ownership update.
The reservation-exclusion invariant is also preserved.
The inserted path avoids every other agent's exclusive set, so agent $a$ does not newly reserve any vertex in $X_t(b)$ for $b\neq a$.
If the selected Haven becomes part of $X_t(a)$, availability ensures that no other agent has a committed reservation for it at any time $t'\geq t$.
If the old Haven remains pending in $X_t(a)$, the invariant already excluded other agents' reservations for that Haven before the update; if it is released, it is no longer subject to the invariant for $a$.
Because commitments in the same while loop are sequential, each later validation sees the reservations and exclusive Havens inserted by earlier commitments; repeated commitments therefore preserve the invariants by induction.
The final execution step preserves the reservation invariants because agents follow committed moves that match their reserved transitions, and pending-release Havens remain blocked to other agents until the owning agent has departed.
The post-execution release step removes an old Haven only after its owner has left it, so it cannot expose a currently occupied vertex to other agents.
Idle agents wait at their Havens.
\end{proof}

\begin{proof}[Proof of Lemma~\ref{lem:quiescence}]
In quiescence, no finite movement suffix remains.
By the self-availability observation after Definition~\ref{def:available-haven}, agent $a$'s current Haven is available to itself.
If the selected Haven is $\eta_T(a)$, the final segment returns to the same protected vertex after leaving it; $a$'s own exclusive set and own finite wait entries do not block its replan.
If $h\neq\eta_T(a)$, availability ensures that $h$ is neither owned by another agent nor occupied by another agent's committed future reservation.
For the first segment, agent $a$ leaves its current Haven $\eta_T(a)$ through a neighbor in $F$ and then follows a path in $G[F]$ to $s_\tau$.
The pickup-to-delivery segment stays inside $G[F]$.
For the final segment, the agent follows a core path from $g_\tau$ to a neighbor of the selected Haven $h$ and then enters $h$.
Thus the first and final segments have length at most $\mathrm{diam}(G[F])+1$, and the middle segment has length at most $\mathrm{diam}(G[F])$.
The constructed paths use no Haven vertices except $a$'s start Haven and the final selected Haven $h$; these may be the same vertex when $h=\eta_T(a)$.
They therefore avoid other agents' exclusive Havens and finite future non-Haven reservations by quiescence.
This construction proves that feasible segments exist; SIPP need not return these exact paths, but quiescent-segment completeness within the stated horizon implies that the three segment validations can succeed.
\end{proof}

\begin{proof}[Proof of Theorem~\ref{thm:complete}]
Suppose some released task remains unfinished forever.
Any assigned task has a finite committed path that reaches its delivery before ending at a Haven, so an unfinished task that persists forever must eventually remain pending.
Because only finitely many tasks are released, there is a time $T_0$ after which no new task is released.
Each successful commitment removes one pending task from the pending set and tasks are assigned at most once, so only finitely many successful commitments can occur.
If no successful commitment occurs at or after $T_0$, set $T_1=T_0$; otherwise let $T_1\geq T_0$ be a time after the final successful commitment.
Temporary validation failures during active traffic do not affect this argument: after $T_1$, all remaining committed finite suffixes are simply executed to their current Havens, and no new task release creates additional work.
By the reservation invariant, every non-idle agent after $T_1$ follows a finite suffix to its current Haven, so the system reaches a quiescent configuration at some later time $T_2$.
At $T_2$ at least one released task is still pending by the supposition above.
Every idle agent has at least one available Haven, namely its current Haven by Definition~\ref{def:available-haven} and the reservation-exclusion invariant.
Lemma~\ref{lem:quiescence} therefore implies that every idle agent can validate a full path for a pending task to at least one available Haven.
Because Lemma~\ref{lem:quiescence} quantifies over any pending task and any available Haven, it applies in particular to the nearest pending task and nearest available Haven selected by Algorithm~\ref{alg:asharp-workshop}; the greedy task-Haven choice therefore cannot block progress in quiescence.
Since the assignment loop is invoked while pending tasks remain, at least one candidate must validate and the loop must commit a task at or after $T_2$, contradicting the choice of $T_1$ as after the final successful commitment.
\end{proof}

\section{Restricted Fixed-Haven Baseline}
\label{app:fixed-pseudocode}

Algorithm~\ref{alg:sharp-workshop} records the fixed-Haven comparison used in the experiments.
It uses the same task selector, SIPP validation, and commitment order as Algorithm~\ref{alg:asharp-workshop}, but always retains the initial injective Haven assignment $\eta^0$.

\begin{algorithm}[!htbp]
\caption{Restricted fixed-Haven \sharpalg assignment loop}
\label{alg:sharp-workshop}
\footnotesize
\begin{algorithmic}[1]
\Require Decision time $t$; pending tasks $Q$; agents $A$; fixed injective Haven assignment $\eta^0$; static exclusive sets $X^0(a)=\{\eta^0(a)\}$; reservation table $\mathcal{R}$; segment horizon $T_{\max}$
\Ensure Updated $Q$, $\mathcal{R}$, and agent commitments
\State Add tasks released at time $t$ to $Q$; let $I$ be the idle or retreating agents
\While{$Q\neq\emptyset$ and $I\neq\emptyset$}
  \State $c^*\gets\bot$; $d^*\gets\infty$
  \For{each $a\in I$}
    \State $\tau\gets\arg\min_{\tau'\in Q}\mathrm{dist}(\mathrm{pos}_a(t),s_{\tau'})$
    \State $(\mathit{success},\pi)\gets\textsc{PlanFullPath}(a,\tau,\eta^0(a),t,\mathcal{R},X^0,T_{\max})$
    \If{$\mathit{success}$ and $\mathrm{dist}(\mathrm{pos}_a(t),s_\tau)<d^*$}
      \State $c^*\gets(a,\tau,\pi)$; $d^*\gets\mathrm{dist}(\mathrm{pos}_a(t),s_\tau)$
    \EndIf
  \EndFor
  \If{$c^*=\bot$} \State \textbf{break} \EndIf
  \State Let $(a,\tau,\pi)\gets c^*$ and keep $a$'s time-$t$ vertex reservation fixed
  \State Delete $a$'s reservations for $t'>t$ and insert the future entries of $\pi$ in $\mathcal{R}$
  \State Assign $\tau$ to $a$ and mark $a$ task-executing
  \State Remove the selected agent and task from $I$ and $Q$
\EndWhile
\State Execute one timestep along committed reservations
\end{algorithmic}
\end{algorithm}

\section*{Declaration on Generative AI}
During the preparation of this work, the authors used OpenAI Codex in order to: Paraphrase and reword, Improve writing style, and Grammar and spelling check.
The authors supplied and verified the scientific content and arguments, reviewed and edited all tool-assisted changes, and take full responsibility for the publication's content.

\bibliography{caipi_prl26}

\end{document}